\DeclareMathAlphabet{\pazocal}{OMS}{zplm}{m}{n}
\newcommand{\captionfonts}{\normalsize}
\long\def\@makecaption#1#2{%
  \vskip\abovecaptionskip
  \sbox\@tempboxa{{\captionfonts #1: #2}}%
  \ifdim \wd\@tempboxa >\hsize
    {\captionfonts #1: #2\par}
  \else
    \hbox to\hsize{\hfil\box\@tempboxa\hfil}%
  \fi
  \vskip\belowcaptionskip}
\newtheorem{theo}{Theorem}
\newtheorem{cor}{Corollary}
\newtheorem{remark}{Remark}
\newcommand{\re}{\mathbb{R}}
\newcommand{\cL}{\mathbfcal{L}}    	
\newcommand{\cF}{\mathbfcal{F}}		
\newcommand{\cf}{\pazocal{F}}
\DeclareMathAlphabet\mathbfcal{OMS}{cmsy}{b}{n}
\newcommand{\cR}{\pazocal{R}}
\newcommand{\cP}{\pazocal{P}}
\newcommand{\Sp}{\mathbb{S}}
\newcommand{\bs}{\boldsymbol}
\begin{document}
\hspace{13.9cm}1

\allowdisplaybreaks

\ \vspace{20mm}\\

{Learning with precise spike times: A new decoding algorithm for liquid state machines}

\ \\
{\bf Dorian Florescu, Daniel Coca$ ^* $}\\
Department of Automatic Control and Systems Engineering, The University of Sheffield, Sheffield, S1 3JD, UK.\\
%

{\bf Keywords:} spiking neural network,  temporal coding, integrate-and-fire neuron, liquid state machine

\thispagestyle{empty}
\markboth{}{NC instructions}
\ \vspace{-0mm}\\
%

\begin{center} {\bf Abstract} \end{center}



There is extensive evidence that biological neural networks encode information in the precise timing of the spikes generated and transmitted by neurons, which offers several advantages over rate-based codes. 
Here we adopt a vector space formulation of spike train sequences and introduce a new liquid state machine (LSM) network architecture and a new forward orthogonal regression algorithm to learn an input-output signal mapping or to decode the brain activity. The proposed algorithm uses precise spike timing to select the presynaptic neurons relevant to each learning task.
We show that using precise spike timing to train the LSM and selecting the Readout presynaptic neurons leads to a significant increase in performance on binary classification tasks, in decoding neural activity from multielectrode array recordings, as well as in a speech recognition task, compared with what is achieved using the standard architecture and training methods. 

\section{Introduction}

It is generally accepted that neurons in the brain encode information not only in their average firing rates - rate coding - but also in the precise timing of spikes - temporal coding \citep{Hirata2008}. The importance of the precise spike timing information  has been documented in many studies \citep{Srivastava2017,Memmesheimer2014,Kayser2009,Jones2004,Gollisch2008,Riehle1997}. \cite{Seth2015} has argued that the two encoding schemes are in fact complementary.

Neuronal coding is reproducible with a precision of a millisecond \citep{Mainen1995,izhikevich2006}. It has been argued that codes that utilise spike timing make better use of the capacity of neural connections than those relying on rate codes \citep{Mainen1995} and that it allows processing information on much shorter time scales allowing to track rapidly changing signals \citep{Gardner2016}.

There is also evidence that during perceptual decisions, learning and behaviour can be driven by a small number of neurons that are trained to read out and interpret very sparse, precisely timed action potentials \citep{Huber2008,Houweling2008,Wolfe2010}.

In recent years, a lot of research effort has been expanded to establish a sound theoretical basis for encoding and decoding using the precise timing of the spikes rather than spike-count rates \citep{faithful,Florescu2015,lazar2015spiking,Florescu2017,Florescu2018}. A range of supervised learning approaches that utilise temporal coding schemes have been developed for recurrent spiking neural networks (SNNs) with feedforward and feedback connections \citep{Gardner2016,Gutig2014}.
Some of the popular SNN training algorithms using temporal coding are based on gradient descent \citep{Bohte2002,Xu2013,Florian2012,Pfister2006} or on spike timing dependent plasticity \citep{Pfister2006,Florian2007,izhikevich2007,Ponulak2010}.

Liquid state machines (LSM) \citep{Maass2002} are a class of recurrent SNNs that consist of a fixed high-dimensional dynamical network of biologically-realistic synapses and spiking neurons that remain unchanged during training, known as reservoir or 'Liquid', followed by a memoryless output or 'Readout' unit with adjustable synaptic weights. The Readout typically combines in a linear fashion the outputs of all the neurons in the Liquid. The LSM model can be viewed as a nonlinear dynamical system where the state vector comprises the states of all neurons in the Liquid, evolving in time according to the internal dynamics and external driving inputs, and the static Readout defines the relationship between the state vector and output \citep{Maass2002}.

The LSMs belong to the general class of reservoir computing approaches, which, compared with high-dimensional recurrent neural networks, have more biologically plausible architectures and simpler training algorithms that only tune the weights of the connections to the Readout unit \citep{Luko2009}.

The reservoir computing approaches also include non-spiking models, as the Echo State Networks (ESNs) \citep{Jaeger2001}. However, the LSMs are more biologically realistic than ESNs and thus better suited for reproducing the computational properties of biological neural circuits.

The LSM Readout is typically trained by performing linear regression using the spike train outputs of the Liquid converted to continuous signals with exponential filters \citep{Maass2002}. 
Other proposed LSM models have feedback connections from the Readout, and are trained with recursive least squares using the filtered outputs of the Liquid \citep{Clopath2017}. This leads to losing the information of the exact spike times generated by the Liquid neurons. The current training methods for LSMs learn target outputs using measurements from all the presynaptic neurons \citep{Maass2002,isolated2005} . Numerically, this model contains a large number of parameters which can lead to overfitting for large neural circuits. Moreover, it is known that only a relatively small number of cortical neurons project to different areas of the central nervous system  \citep{Hausler2007,Thomson2002}.  

In the case of ESNs, \cite{Dolinsky2017} used orthogonal forward regression (OFR) to identify the contribution of each individual neuron to the response variable, and concluded that a small number of presynaptic neurons are enough to achieve accurate results. 

Here we propose a new liquid state machine (LSM) architecture, and a new training
algorithm that outperforms the standard methods \citep{Maass2002,isolated2005}. The architecture consists of a Liquid, comprising only a SNN, in series with a spike time based Readout. The new algorithm, called  OFR with Spike Trains (OFRST), identifies the best synaptic connectivity for the Readout unit of the LSM. The learning algorithm relies on a distance metric between two spike trains that are elements of an inner product vector space \citep{Carnell2005}.

Theoretical results demonstrate that the proposed architecture can learn any continuous target output by mapping it onto a unique target spike train sequence. We prove that the proposed LSM architecture achieves higher accuracy in training compared with the standard methods.

Numerical simulations are given to show the performance of the proposed method
compared to the standard methods for binary and multi-label input classification tasks.
Additional numerical examples are used to show separately the benefit of selecting the Readout connectivity using OFR and computing with precise spike times.
The advantage of the proposed method is also demonstrated for two problems involving real world data. First we consider the problem of classifying the movement direction of drifting sinusoidal gratings using visually evoked multi-array recordings from the primary visual cortex of the monkey. 
Second, we test our method against the standard methods on a problem of speech recognition.

The paper is structured as follows. Section 2 introduces the standard architecture
and method for training an LSM. Section 3 presents the proposed approach. Numerical simulations are in Section 4. Section 5 presents the conclusion.

\section{The Standard LSM Architecture and Training Method}
\label{sec:standard}

The spike train inputs and outputs of the LSM are elements of space $ \Sp_0 $ satisfying 
\begin{equation*}
	\Sp_0=\left\lbrace s \vert s=\{t_k\}_{k=1}^P, t_{k+1}>t_k\geq0,\forall k=1,\dots,P-1  \right\rbrace
\end{equation*}

The Liquid is modelled by an operator $ \cL $, which maps the vector of input spike trains $ \bs{s}^{in} $ into a vector of continuous functions $ \bs{x}(t) $, also known as the state of the Liquid. The Readout is modelled by operator $ \cR $, which maps $ \bs{x}(t) $ into the continuous scalar function $ y(t) $, which denotes the LSM output. The function $ y(t) $ satisfies \citep{Maass2002}
\begin{equation*}
y(t)=\cR\left(\cL \bs{s}^{in}\right),
\end{equation*}
where $ \bs{s}^{in}=\left[s_1^{in},\dots,s_{N_{in}}^{in}\right] $ , $ s^{in}_k=\left\lbrace t_{k,1}^{in},\dots,t_{k,P_k^{in}}^{in} \right\rbrace $, $ \cL:\left[\Sp_0\right]^{N_{in}}\rightarrow \left[L^2(\re)\right]^N, \cR:\left[L^2(\re)\right]^{N}\rightarrow L^2(\re) $, where $ N_{in} $ and $ N $ denote the number of inputs and number of neurons in the SNN, respectively, $ P_k^{in} $ denotes the number of spikes in input $ k $, and $ \bs{x}(t) = \left(\cL \bs{s}^{in}\right)(t) $.

The Liquid is represented as the composition of two mathematical operators $ \cL=\cF \cL_{SNN} $, where $ \cL_{SNN}:\left[\Sp_0\right]^{N_{in}}\rightarrow\left[\Sp_0\right]^N $ models a generic SNN and $ \cF:\left[\Sp_0\right]^N\rightarrow [L^2(\re)]^{N} $, $ \cF \bs{s} = [\cf s_1, \cf s_2, \dots, \cf s_N] $, $\forall \bs{s}\in\left[\Sp_0\right]^N, \bs{s}=[s_1,\dots,s_N] $ models a pool of linear filters
\begin{equation}
\label{eq:filters}
\cf s_n = \sum_{k=1}^{P_n}e^{-\frac{t-t_k^n}{\tau_s}}\cdot 1_{[t_k^n,\infty)}(t),
\end{equation}
where $ P_n $ denotes the number of spikes in $ s_n $, $ 1_{[t_k^n,\infty)} $ denotes the characteristic function of interval $ [t_k^n,\infty) $, and $ \tau_s $ denotes the time constant of the filter.

\cite{Maass2002} demonstrated that this model has, under idealised conditions, universal real-time computing power. The standard LSM architecture is presented in Figure \ref{fig:LSM_standard_diagram}.
\begin{figure}[!ht]
	\hfill
	\begin{center}
		\includegraphics[width=4.5in,trim={0cm 0cm 0cm 0cm},clip]{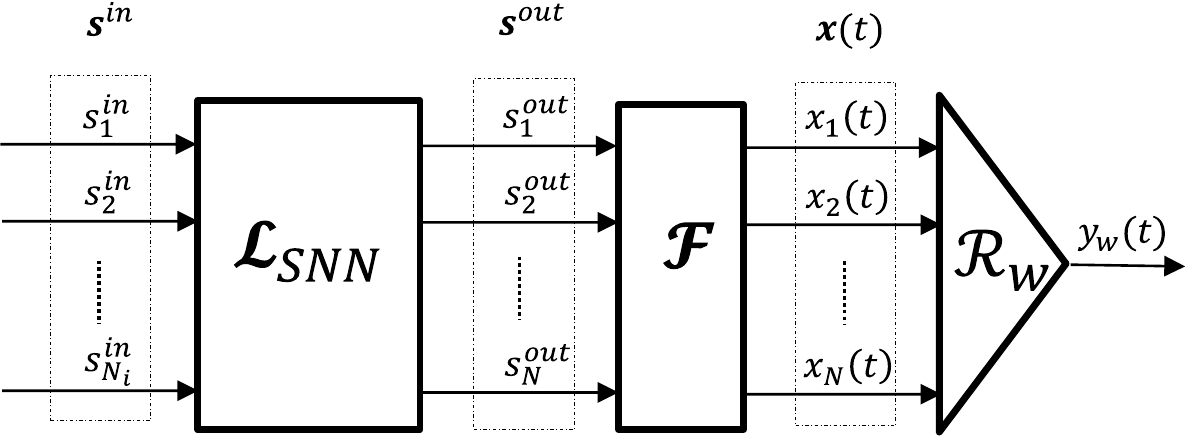}
	\end{center}
	\caption{Block diagram of the standard architecture used for training LSMs. It consists of three blocks connected in series: the Liquid $ \cL_{SNN} $, the pool of filters $ \cF $ and the readout $ \cR_w $.}
	\label{fig:LSM_standard_diagram}
\end{figure}

\begin{remark} \label{rem:lin_ind}
	Throughout the paper, it will be assumed that $ s^{out}_k \neq s^{out}_l, $  $\forall k,l\in\{1,\dots,N\},$  $k\neq l $. In a practical scenario it is very unlikely that two neurons will generate two identical spike trains simultaneously. However, if this happens to be true, only the distinct outputs will be used for training. 
\end{remark}

The most common Readout is the linear unit $ \cR_{\bs{W}} \bs{x}(t)= \sum_{n=1}^{N} w_n x_n(t)$, where $ \bs{W}=[w_1,\dots,w_N] $ and $ \bs{x}(t)=[x_1(t),\dots,x_N(t)] $. This Readout was shown to classify time-varying inputs with the same power as complex non-linear Readouts, given a large enough Liquid \citep{Hausler2002}. A typical way to train the Readout is by tuning the weights using the least squares (LS) algorithm
\begin{equation}
 \bs{w}_{opt}= \underset{\bs{w}}{\text{argmin}} \| y^*-  y_{\bs w} \|_{L^2}, 
\end{equation} 
where $ y^* \in L^2(\mathbb{R}) $ denotes the target output function, $ \|\cdot\|_{L^2} $ denotes the standard norm in $ L^2(\mathbb{R}) $ and $ y_{\bs w}=\cR_{\bs{w}} \cF\cL_{SNN} \bs{s}^{in} $ denotes the predicted output.

In practice, the continuous state of the liquid $ \bs x(t) $ is sampled uniformly with period $ \Delta T>0 $. The function $ \bs x(t)=\left[ \cf s_1^{out}, \cf s_2^{out}, \dots,\cf s_N^{out}\right] $ is not continuous in a mathematical sense at points $ \{t_k^n\}_{k=1}^{P_n},n=1,\dots,N $,  due to the expression of operator $ \cf $ \eqref{eq:filters}. Therefore, for any sequence of spike trains $ \{s_1^{out},\dots,s_N^{out}\} $,  $ \bs x(t) $ in not bandlimited. This can also be explained by viewing the values of operator $ \cf $ as the output of an exponential filter with impulse response $ h(t)=e^{-\frac{t}{\tau_s}} $, given a train of Dirac delta pulses $ \sum_{k=1}^{P_n} \delta(t-t_k^n)$. Given that the filter is not ideal, its output has arbitrarily large frequency components, and thus the samples $ \left\lbrace\bs{x}(kT)\right\rbrace $ are affected by aliasing, due to Shannon's law. This leads to computing weights $ \bs{w}_{opt} $ that are deviated form the theoretical optimal values, as well as an imprecise final output prediction $ y_{\bs{w}_{opt}}(t) $.

Moreover, in practice not all synaptic connections of the Readout are relevant to a particular task, so that training the weights of all possible connections from the Liquid neurons to the Readout can easily lead to overfitting. 

There are a few variations of LS that introduce an additional parameter, also known as hyperparameter, in order to control the effective complexity of the model and to reduce overfitting. Some of the standard methods doing this are LS with $ L^2 $ regularization, or ridge regression (RR), LS with $ L^1 $ regularization, or lasso, and early stopping (ES). The regularization parameter for RR and lasso, and the number of iterations for ES are typically tuned to minimise the prediction error on the validation dataset \citep{Bishop}. These methods can lead to a Readout with smaller weights, or fewer presynaptic connections to the Liquid.

However, computing the Readout weights with RR, lasso or ES is affected by approximation error, as a result of the aliasing effect caused by uniform sampling. This leads to Readout presynaptic connections to neurons that are less relevant for the computing task. Furthermore, the output spikes of a biological neural network do not lie on a grid of uniformly spaced time points, and therefore are not directly compatible with the training methods above.



\section{A New LSM Training Approach using Precise Times}

\subsection{The Carnell-Richardson Spike Train Space}

The space $ \Sp_0 $ is not a linear space because it does not allow any operations between spike trains. To overcome this problem, this space is extended to the Carnell-Richardson spike train space \citep{Carnell2005}
\begin{equation*}
\Sp=\left\lbrace s=\left\lbrace \left(a_k,t_k\right) \right\rbrace_{k=1}^P, P\geq 1, t_k, a_k \in \re, t_k\neq t_l, \forall k,l\in\{1,\dots,P\}, k\neq l \right\rbrace.
\end{equation*}

\cite{Carnell2005} have proven that $ \Sp $ is an inner product space, where the vector sum, scalar multiplication and inner product of two spike trains $ s_1,s_2\in\Sp $ are defined as
\begin{align*}
s_1+s_2&=\{(a_k^1,t_k^1)\}_{k=1}^{M_1}\cup \{(a_k^2,t_k^2)\}_{k=1}^{M_2},\\
\alpha \cdot s&=\{(\alpha\cdot a_k,t_k)\}_{k=1}^M, \forall \alpha \in \re,\\
\left\langle s_1,s_2\right\rangle_{\Sp} &= \sum_{k_1=1,k_2=1}^{k_1=M_1,k_2=M_2}a_{k_1}^1a_{k_2}^2\cdot e^{-\frac{|t_{k_1}^1-t_{k_2}^2|}{\tau_s}},
\end{align*} 
where $ \tau_s>0 $ is a scaling factor. The inner product $ \langle \cdot,\cdot \rangle_{\Sp} $ generates a norm $ \|\cdot\|_{\Sp} $ satisfying $ \|s\|_{\Sp}=\sqrt{\langle s,s \rangle_{\Sp}}, \forall s \in \Sp $. Figure \ref{fig:Carnell} illustrates an example of a linear operation between two randomly generated spike trains $s_1,s_2\in \Sp $, presented comparatively with the equivalent operation in $ L^2(\mathbb{R}) $.

\begin{figure}[!ht]
	\hfill
	\begin{center}
		\includegraphics[width=6in,trim={1cm 0cm 0cm 0cm},clip]{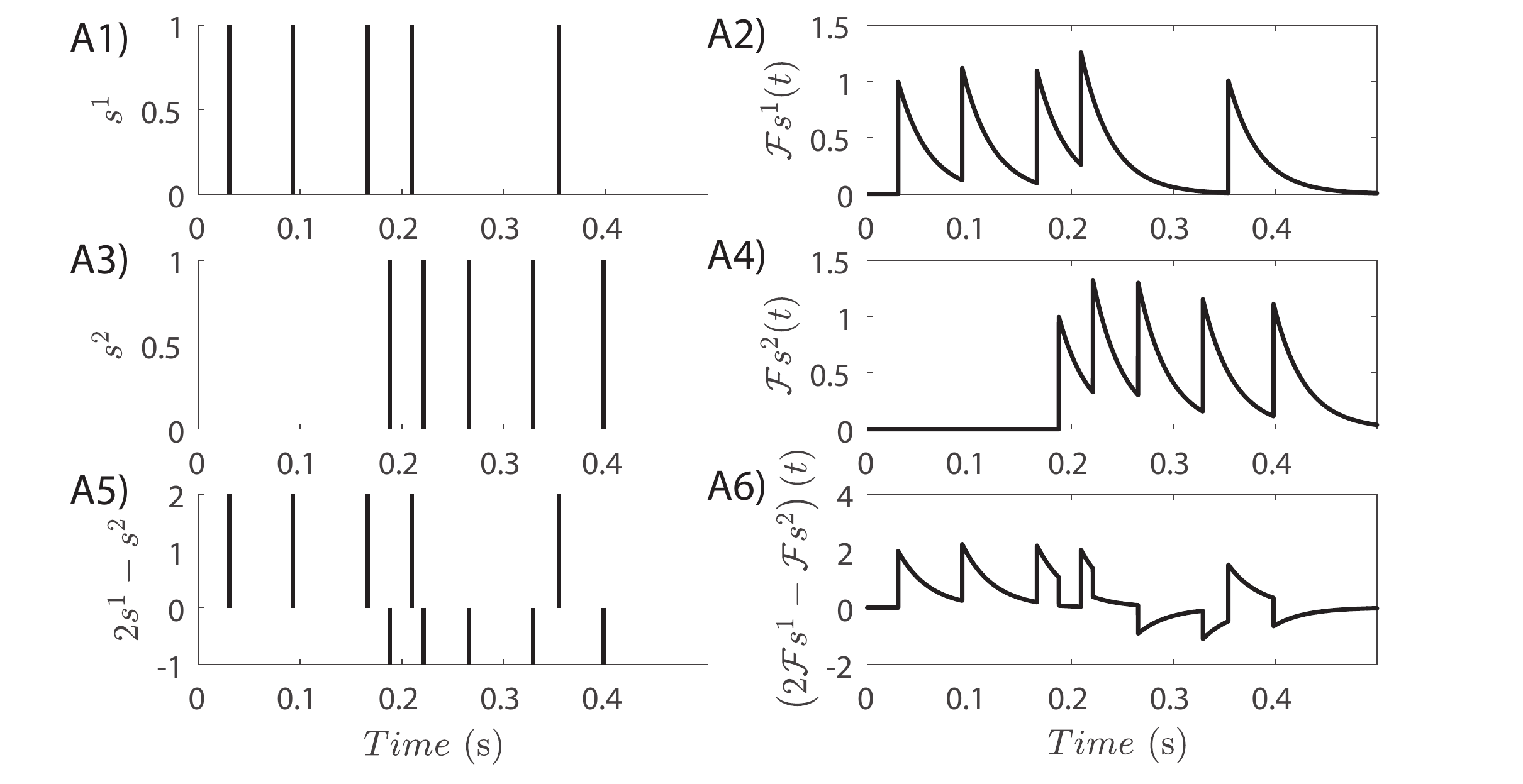}
	\end{center}
	\caption{An example of a linear operation in $ \Sp $. Two spike trains $ s^1 $, $ s^2 \in \Sp $, and their corresponding elements $ \cf s^1, \cf s^2 \in L^2(\mathbb{R}) $, are generated in time interval $ [0,0.5\ \text{s}] $ (A1-4). The equivalent linear operations in the two spaces $ 2s^1-s^2 $ and $ 2 \cf s^1-\cf s^2 $ are depicted in (A5-6).}
	\label{fig:Carnell}
\end{figure}

A spike train $ s_0=\{t_k\}_{k=1}^P\in \Sp_0 $, as defined by the standard method, can be mapped uniquely onto an element $ s\in\Sp $, such that $ s=\left\lbrace \left(1,t_k\right) \right\rbrace_{k=1}^P $. \cite{Maass2002} have defined a metric $ d $ on $ \Sp_0 $
\begin{equation*}
d(s_1,s_2)=\left[\int_{\re}\left[\left(\cf s_1\right)(t)-\left(\cf s_2\right)(t) \right]^2 dt\right]^{1/2},
\end{equation*}
where $ \cf:\Sp_0\rightarrow L^2(\re), \cf s= \sum_{k=1}^{P}e^{-\frac{t-t_k}{\tau_s}}\cdot 1_{[t_k^n,\infty)}(t)$ denotes the output of a linear filter with exponential decay and time constant $ \tau_s, $ given spiking input $ s. $  The norm $ \|\cdot\|_{\mathbb{S}} $ relates to metric $ d $ as follows $ \|s_1-s_2\|_{\mathbb{S}}^2=2\cdot d(s_1,s_2)^2, \forall s_1,s_2 \in \Sp_0. $ However, in a practical setting, the metric $ d $ is approximated by $ d_{\Delta T} $, computed on a uniform grid with sampling time $ \Delta T. $ Then the following holds

\begin{equation*}
\lim_{\Delta T \rightarrow 0} d_{\Delta T} (s_1,s_2)=\frac{1}{\sqrt{2}} \| s_1-s_2\|_{\Sp}.
\end{equation*}

In order to show the disadvantage in computing $ d_{\Delta T} $, we generated two random spike trains $ s_1 $ and $ s_2 $ with $ 100 $ spike times each. We then computed $ \| s_1-s_2 \|_{\Sp} $ and $ d_{\Delta T}(s_1,s_2) $ for $ 100 $ values of $ \Delta T $ on $ [1\ \text{ms}, 100\ \text{ms}] $, and $ \tau_s=30\ \text{ms} $. The results, depicted in Figure \ref{fig:norm_comparison}, show that the values of $ d_{\Delta T}(s_1,s_2)/\sqrt{2} $ oscillate around $ \| s_1-s_2 \|_{\Sp} $ as $ \Delta T \rightarrow 0. $ However, the computing time for $ d_{\Delta T} $ increases exponentially with $ 1/\Delta T. $ Thus, at the sampling interval of $ 2\ \text{ms} $, which is used to simulate the LSM, the spike based metric results in a similar value to the standard metric, but runs three times faster.

\begin{figure}[!ht]
	\hfill
	\begin{center}
		\includegraphics[width=5in,trim={0cm 0cm 1cm 1cm},clip]{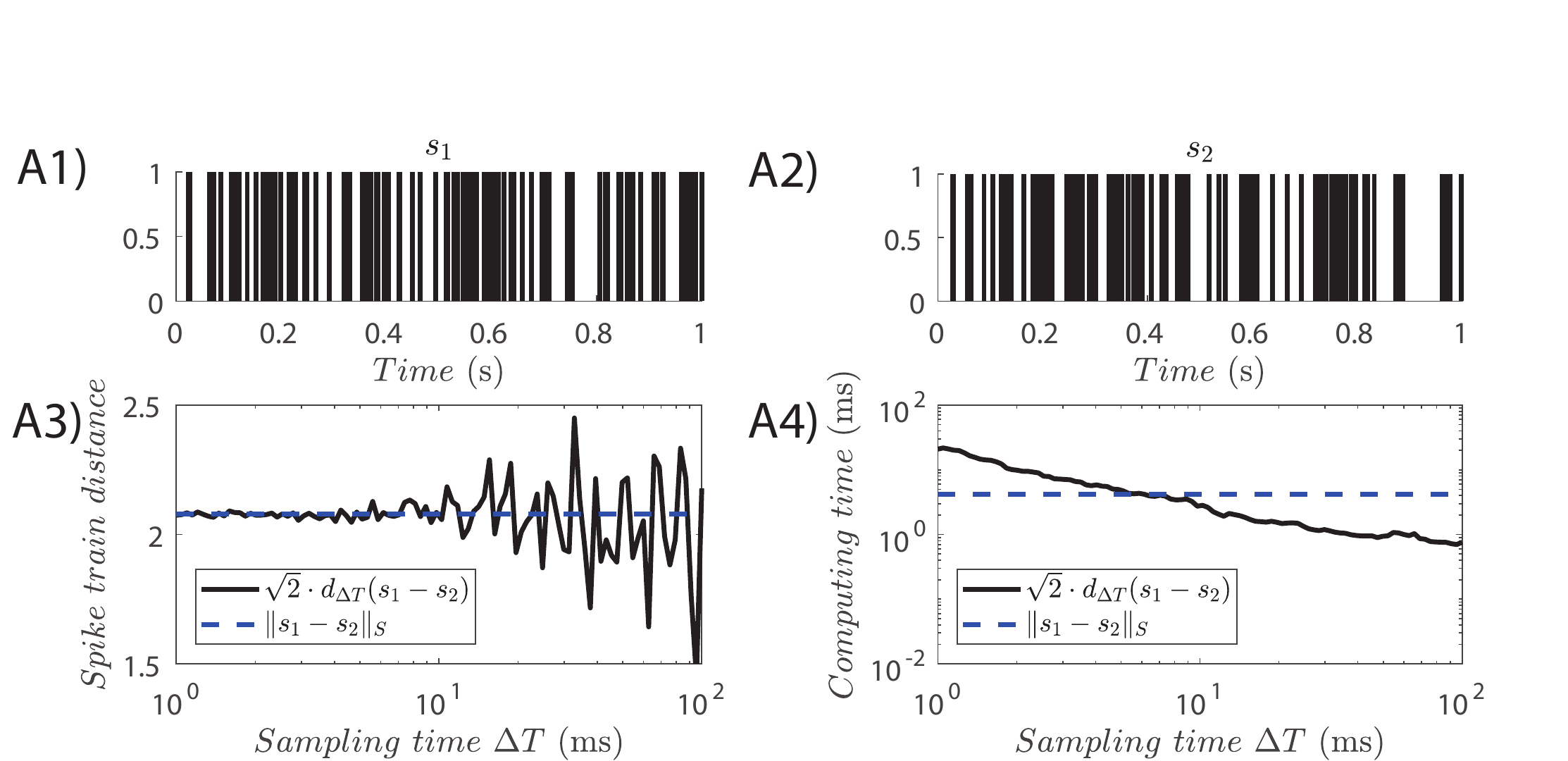}
	\end{center}
	\caption{Comparison between the Carnell-Richardson spike train distance $ \|s_1-s_2\|_{\Sp} $ and the standard metric $ d_{\Delta T}(s_1-s_2) $: two randomly generated spike trains $ s_1, s_2 $ (A1,2) and their corresponding distance calculated with the two metrics (A3,4).}
	\label{fig:norm_comparison}
\end{figure}

\subsection{The Proposed LSM Architecture and Training Method}

We propose a new spike time based Readout architecture, which does not require the bank of filters $ \cF $ (Figure \ref{fig:LSM_proposed_diagram}).
\begin{figure}[!ht]
	\hfill
	\begin{center}
		\includegraphics[width=3.5in,trim={0cm 0cm 0cm 0cm},clip]{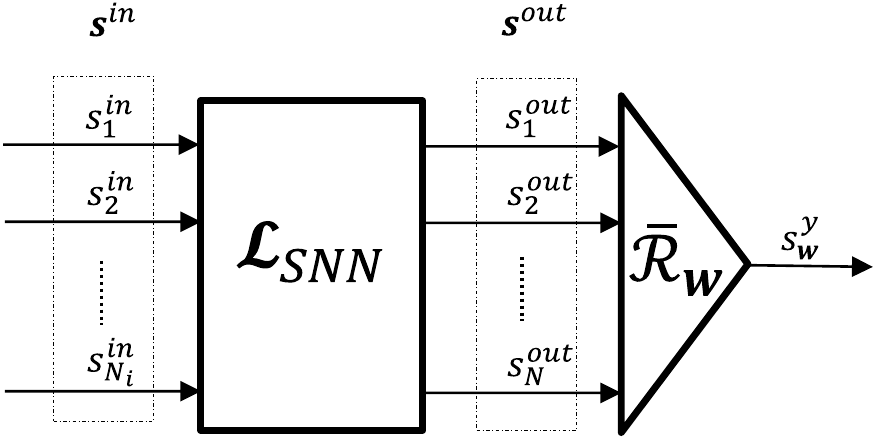}
	\end{center}
	\caption{Block diagram of the proposed architecture used for training LSMs , consisting of two blocks connected in series: the Liquid $ \cL_{SNN} $ and the proposed spike based Readout $ \bar{\cR}_w $.}
	\label{fig:LSM_proposed_diagram}
\end{figure}

The Readout $ \bar{\cR}_{\bs w} $ is defined using the operations in $ \Sp $ as
\begin{equation*}
\bar{\cR}_{\bs w}\bs{s}^{out}=\sum_{n=1}^{P_n^{out}}w_n s_n^{out}=s_{\bs w}^y.
\end{equation*}

 Let $ s^{y*} $ be a target spike train. Then the optimal $ \bs w $ in the least squares sense is
\begin{equation*}
\bar{\bs w}_{opt}=\underset{\bs w}{\text{argmin}}\| s^{y*}- s_{\bs w}^{y} \|_{\Sp},
\end{equation*}
where $ \| \cdot \|_{\Sp} $ denotes the standard norm in $ \Sp $.
 
The proposed architecture can be extended to learn continuous target signals. To this end, the following results demonstrate that any continuous target function $ y^*\in L^2(\re) $ can be mapped uniquely onto a spike train $ s^{y*}\in \Sp $.
 
\begin{theo} \label{theo:mapping}
Let $ \Sp^{out} $ denote the subset of $ \Sp $ generated by the outputs of the SNN, such that $  \Sp^{out}=\textup{span}\{s_1^{out},\dots,s_N^{out}\}\subset\Sp. $ 
Let $ \cf:\Sp^{out}\rightarrow L^2(\re)$ be an operator defined by 
\begin{equation}
\label{eq:filters2}
\cf s = \sum_{k=1}^{P}a_k e^{-\frac{t-t_k}{\tau_s}}\cdot 1_{[t_k,\infty)}(t), \forall s \in \Sp^{out}, s= \{(a_k,t_k)\}_{k=1}^P.
\end{equation}

Moreover, let $ \mathbb{F}\mathbb{S}^{out} $ denote the subset of $ L^2(\re) $ generated by the filtered outputs of the SNN, such that $ \mathbb{F}\mathbb{S}^{out}=\textup{span} \{\cf s_1^{out},\dots,\cf s_N^{out}\}. $ Then the following mapping is well defined
\begin{equation} \label{eq:mapping}
\pazocal{M}:L^2(\re)\rightarrow\Sp^{out}, \pazocal{M}(y)=\cf^{-1}\pazocal{P}_{\mathbb{F}\mathbb{S}^{out}} (y), \forall y \in L^2(\re),
\end{equation}
where $ \cP $ denotes the projection operator.
\end{theo}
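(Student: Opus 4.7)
The plan is to verify the two ingredients required for $\pazocal{M}$ to be well-defined: first, that the orthogonal projection $\pazocal{P}_{\mathbb{F}\mathbb{S}^{out}}(y)$ exists and is unique for every $y\in L^2(\re)$; and second, that $\cf$ restricted to $\Sp^{out}$ is a linear bijection onto $\mathbb{F}\mathbb{S}^{out}$, so that $\cf^{-1}$ makes sense on the image of the projection. Once both are in hand, the composition $\cf^{-1}\circ\pazocal{P}_{\mathbb{F}\mathbb{S}^{out}}$ is automatically a well-defined map from $L^2(\re)$ into $\Sp^{out}$.

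The projection part is quick. Each $\cf s_n^{out}$ is a finite sum of exponentially decaying functions supported on a half-line $[t_k^n,\infty)$, hence lies in $L^2(\re)$. Consequently $\mathbb{F}\mathbb{S}^{out}$ is a finite-dimensional subspace of the Hilbert space $L^2(\re)$ and is therefore closed, so the orthogonal projection $\pazocal{P}_{\mathbb{F}\mathbb{S}^{out}}$ is well-defined as a bounded operator on $L^2(\re)$ taking values in $\mathbb{F}\mathbb{S}^{out}$. Surjectivity of $\cf\colon\Sp^{out}\to\mathbb{F}\mathbb{S}^{out}$ is then immediate from linearity: every $f\in\mathbb{F}\mathbb{S}^{out}$ has the form $f=\sum_n c_n\cf s_n^{out}=\cf\bigl(\sum_n c_n s_n^{out}\bigr)$, and the argument lives in $\Sp^{out}$.

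The heart of the proof, and its main obstacle, is injectivity of $\cf$ on $\Sp^{out}$, because distinctness of the generators (Remark~\ref{rem:lin_ind}) does not by itself imply that $\{s_n^{out}\}$ is linearly independent in $\Sp$. I would avoid a basis-counting shortcut and argue directly on the image side: given $s=\sum_n c_n s_n^{out}$, I collect contributions at each common spike time by writing $T=\bigcup_n T_n$ for the union of the spike-time sets of the generators, so that the $\Sp$-sum takes the normal form $s=\{(a(t^*),t^*)\}_{t^*\in T}$ with $a(t^*)=\sum_{n\colon t^*\in T_n} c_n$. Applying $\cf$ yields $\cf s(t)=\sum_{t^*\in T}a(t^*)\,e^{-(t-t^*)/\tau_s}\mathbf{1}_{[t^*,\infty)}(t)$. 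The crucial ingredient is that the family $\{e^{-(t-t^*)/\tau_s}\mathbf{1}_{[t^*,\infty)}\}_{t^*\in T}$ is linearly independent in $L^2(\re)$, since each member has a jump discontinuity at a distinct point $t^*$; this can be seen by inspecting right-hand limits at the jump locations and solving for the coefficients one at a time starting from the earliest $t^*$. Hence $\cf s=0$ forces $a(t^*)=0$ for every $t^*\in T$, which is exactly the statement that $s$ is the zero element of $\Sp$. With injectivity secured, $\cf^{-1}$ is a well-defined linear map on $\mathbb{F}\mathbb{S}^{out}$, and $\pazocal{M}$ is well-defined as claimed.
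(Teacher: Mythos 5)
Your proof is correct, and on the decisive step it takes a genuinely different --- and more careful --- route than the paper. The paper's own proof has the same skeleton (show $\cf:\Sp^{out}\rightarrow\mathbb{F}\mathbb{S}^{out}$ is a well-defined linear bijection, with surjectivity immediate from linearity), but for injectivity it simply writes $\cf s_1=\cf s_2\Rightarrow\sum_k(v_k-w_k)\cf s_k^{out}=0$ and then asserts that the functions $\{\cf s_k^{out}\}$ are linearly independent ``according to Remark~\ref{rem:lin_ind}.'' As you correctly observe, that remark only guarantees the output spike trains are pairwise distinct, which does not imply linear independence in $\Sp$ (for instance $\{(1,t_a)\}+\{(1,t_b)\}-\{(1,t_a),(1,t_b)\}=0$ gives three distinct but dependent trains), so the paper's coefficient-matching step is not justified as written. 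Your argument repairs this: by reducing $s=\sum_nc_ns_n^{out}$ to its normal form over the union of spike times and exploiting the linear independence in $L^2(\re)$ of truncated exponentials with distinct jump locations (solved coefficient-by-coefficient from the earliest jump), you show $\cf s=0\Rightarrow s=0$ on all of $\Sp^{out}$ without ever needing the generators to be independent, which is exactly what well-definedness of $\cf^{-1}$ requires. You also supply the closedness of the finite-dimensional subspace $\mathbb{F}\mathbb{S}^{out}$ needed for the projection to exist, a point the paper passes over silently. The only cost of your approach is a little extra bookkeeping with normal forms; what it buys is a proof that remains valid precisely in the degenerate cases where the paper's shortcut breaks down.
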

\begin{proof}
	See Appendix 1.
	\end{proof}

Theorem \ref{theo:mapping} defines a mapping that allows converting any continuous target output function $ y^*(t) $ into a unique target output spike train $ s^{y*}. $
The operator $ \cf $ in \eqref{eq:filters2} is the extension of the filtering operator in \eqref{eq:filters} to the more general space $ \Sp $.
The following result assesses the prediction accuracy of the proposed method relative to the standard method for continuous target functions.

 \begin{theo} \label{theo:optimum}
 	Let $ y^*\in L^2(\re) $ and let $ \bs w_{opt} $ be the vector of weights computed for the standard architecture, such that $ \bs{w}_{opt} =\underset{\bs{w}}{\textup{argmin}} \| {y^*} - {\cR}_{\bs{w}} \cF  \bs s^{out}\|_{L^2} $. It follows that
 	\begin{equation*}
 	\bs{w}_{opt} =\underset{\bs{w}}{\textup{argmin}} \| s^{y*} - \bar{\cR}_{\bs{w}}  \bs s^{out}\|_{\Sp}=\bar{\bs{w}}_{opt}, 
 	\end{equation*}
where $ s^{y*}=\pazocal{M}(y^*) $, $ \pazocal{M}(y^*)=\cf^{-1}\pazocal{P}_{\mathbb{F}\mathbb{S}^{out}} (y^*) $, $ \cP $ denotes the projection operator and $ \mathbb{F}\Sp^{out} = \textup{span} \left\lbrace \cf s_1^{out},\dots,\cf s_N^{out} \right\rbrace. $
 \end{theo}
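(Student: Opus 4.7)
The plan is to show that the filtering operator $\cf$, when restricted to $\Sp^{out}$, is a scaled isometric isomorphism onto $\mathbb{F}\Sp^{out}$, and then to reduce both least-squares problems to the same optimization via an orthogonal decomposition of $y^*$ in $L^2(\re)$.

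First, I would verify the key identity
\begin{equation*}
\langle \cf s_1, \cf s_2 \rangle_{L^2} = \tfrac{\tau_s}{2}\, \langle s_1, s_2 \rangle_{\Sp}, \qquad \forall s_1, s_2 \in \Sp,
\end{equation*}
by a direct calculation: expanding $\cf s_1$ and $\cf s_2$ using \eqref{eq:filters2}, the cross terms produce integrals of the form $\int_{\max(t_{k_1}^1,t_{k_2}^2)}^{\infty} e^{-(2t-t_{k_1}^1-t_{k_2}^2)/\tau_s}dt$, which evaluate to $(\tau_s/2)\,e^{-|t_{k_1}^1-t_{k_2}^2|/\tau_s}$. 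Combined with Theorem~\ref{theo:mapping} (which guarantees $\cf:\Sp^{out}\to \mathbb{F}\Sp^{out}$ is a well-defined linear bijection, so that $\cf^{-1}$ in \eqref{eq:mapping} makes sense), this makes $\cf$ a scaled isometric isomorphism between $\Sp^{out}$ and $\mathbb{F}\Sp^{out}$.

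Next I would apply the orthogonal decomposition $y^* = \pazocal{P}_{\mathbb{F}\Sp^{out}} y^* + r$, where $r\perp \mathbb{F}\Sp^{out}$ in $L^2(\re)$. Since $\cR_{\bs w}\cF\bs s^{out}=\sum_{n=1}^N w_n\,\cf s_n^{out} \in \mathbb{F}\Sp^{out}$ for every $\bs w$, Pythagoras gives
\begin{equation*}
\| y^* - \cR_{\bs w}\cF\bs s^{out}\|_{L^2}^2 = \|r\|_{L^2}^2 + \bigl\|\pazocal{P}_{\mathbb{F}\Sp^{out}} y^* - \textstyle\sum_n w_n\,\cf s_n^{out}\bigr\|_{L^2}^2,
\end{equation*}
and the first summand is independent of $\bs w$. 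Thus $\bs w_{opt}$ is the minimizer of the second summand alone.

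Finally, using the isometry from Step~1 and the fact that $\pazocal{P}_{\mathbb{F}\Sp^{out}} y^* = \cf s^{y*}$ and $\sum_n w_n\,\cf s_n^{out} = \cf\bigl(\bar{\cR}_{\bs w}\bs s^{out}\bigr)$, both of which lie in $\mathbb{F}\Sp^{out}$, I would rewrite
\begin{equation*}
\bigl\|\pazocal{P}_{\mathbb{F}\Sp^{out}} y^* - \textstyle\sum_n w_n\,\cf s_n^{out}\bigr\|_{L^2}^2 = \tfrac{\tau_s}{2}\,\bigl\| s^{y*} - \bar{\cR}_{\bs w}\bs s^{out}\bigr\|_{\Sp}^2.
\end{equation*}
Since the two cost functions differ only by a positive constant and an additive term independent of $\bs w$, their argmins coincide, i.e.\ $\bs w_{opt} = \bar{\bs w}_{opt}$.

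The main obstacle is the first step: proving the inner product identity cleanly, since it requires carefully splitting the double sum into the cases $t_{k_1}^1 \le t_{k_2}^2$ and $t_{k_1}^1 > t_{k_2}^2$ so that the characteristic functions combine correctly. Once that identity is in hand, the rest of the argument is a standard Hilbert space projection argument, with Theorem~\ref{theo:mapping} ensuring injectivity of $\cf$ on $\Sp^{out}$ and hence well-posedness of the inverse map used in the statement.
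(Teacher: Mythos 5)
Your proposal is correct and follows essentially the same route as the paper's proof: both arguments rest on the scaled isometry $\langle \cf s_1,\cf s_2\rangle_{L^2}=c\,\langle s_1,s_2\rangle_{\Sp}$ together with the projection property $\langle y^*,v\rangle_{L^2}=\langle \pazocal{P}_{\mathbb{F}\mathbb{S}^{out}}y^*,v\rangle_{L^2}$ for $v\in\mathbb{F}\mathbb{S}^{out}$, the paper reaching the conclusion by expanding and completing the square where you invoke Pythagoras on the orthogonal decomposition of $y^*$. The only cosmetic discrepancy is the constant (you carry $\tau_s/2$, the paper writes $1/2$), which is immaterial to the argmin.
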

\begin{proof}
See Appendix 1.
\end{proof}
 
\begin{cor}
	\label{cor:1}
	Theorem \ref{theo:optimum} proves that the proposed methodology achieves, in theory, the same accuracy as the state-of-the-art method when learning continuous target signals. In practice, however, the accuracy of the standard method is lower because it is affected by the approximation error introduced when calculating $ \bs{w}_{opt} $ and $ y_{\bs{w}_{opt}}(t) $ from uniform samples, which doesn't affect the proposed method.
\end{cor}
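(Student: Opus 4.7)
The plan is to address the two halves of the corollary separately: the theoretical equivalence, and the practical gap that arises in finite precision.

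For the first sentence, the proof is immediate from Theorem~\ref{theo:optimum}. Since $\bs w_{opt}=\bar{\bs w}_{opt}$ as minimisers, and since $\cR_{\bs w_{opt}}\cF\bs s^{out}=\cf\bar{\cR}_{\bar{\bs w}_{opt}}\bs s^{out}$ (because $\cf$ acts linearly on the $w_n$ and maps $\Sp^{out}$ bijectively onto $\mathbb{F}\Sp^{out}$, as established in the proof of Theorem~\ref{theo:mapping}), the $L^2$-error achieved by the standard method at its optimum equals the $\Sp$-error achieved by the proposed method at its optimum. I would simply quote these two facts and note that they rule out any theoretical accuracy advantage of one formulation over the other.

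For the second sentence, the plan is to contrast how each method actually evaluates the objective in practice. In the standard pipeline, the least-squares problem is discretised by sampling $\bs x(t)=\cF\cL_{SNN}\bs s^{in}$ uniformly with period $\Delta T$ and replacing the $L^2$ inner products in the normal equations by Riemann sums. I would then point out — as the paragraph immediately after Figure~\ref{fig:LSM_standard_diagram} already argues — that because each coordinate of $\bs x(t)$ is a sum of one-sided exponentials $e^{-(t-t_k^n)/\tau_s}\mathbf{1}_{[t_k^n,\infty)}$, it carries a jump at every $t_k^n$ and is therefore not bandlimited; by the Shannon sampling theorem the sample sequence $\{\bs x(k\Delta T)\}$ is aliased, so the Gram matrix and right-hand side assembled from these samples differ from their exact $L^2$ counterparts, yielding a computed weight vector $\hat{\bs w}$ that is perturbed away from $\bs w_{opt}$; the same aliasing then contaminates the reconstructed $y_{\hat{\bs w}}(t)$. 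By contrast, the proposed method evaluates $\|s^{y*}-\bar{\cR}_{\bs w}\bs s^{out}\|_{\Sp}$ through the closed-form expression for $\langle\cdot,\cdot\rangle_{\Sp}$, which is a finite sum in the exact spike times and amplitudes and involves no sampling of a continuous-time signal; the only error is standard floating-point round-off in the linear solve, and in particular is independent of $\Delta T$.

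The main obstacle is that, as stated, the corollary is a qualitative comparison rather than a quantified error bound: to make it fully rigorous one would need to (i) fix a specific numerical quadrature for the standard method, (ii) bound the aliasing perturbation of the normal equations via Parseval and the $1/\omega$ spectral decay of a filtered jump, and (iii) propagate the perturbation through the condition number of the Gram matrix to obtain an $O(\Delta T^{\alpha})$ bound on $\|\hat{\bs w}-\bs w_{opt}\|$. I would not carry this out in detail here, since the corollary only claims existence of such error, not a rate; I would instead appeal to the earlier discussion of aliasing and to Figure~\ref{fig:norm_comparison}, which empirically exhibits the oscillation of $d_{\Delta T}/\sqrt{2}$ around $\|\cdot\|_{\Sp}$ as $\Delta T\to 0$, as evidence that the perturbation is strictly nonzero in the standard method and strictly zero in the proposed one.
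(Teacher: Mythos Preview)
Your proposal is correct and aligns with the paper's reasoning: the paper does not supply a separate proof for this corollary at all, treating it as an immediate interpretive remark on Theorem~\ref{theo:optimum} combined with the aliasing discussion in Section~\ref{sec:standard}. Your two-part breakdown (theoretical equivalence via $\bs w_{opt}=\bar{\bs w}_{opt}$, practical gap via non-bandlimitedness of $\cF\bs s^{out}$ and the closed-form nature of $\langle\cdot,\cdot\rangle_{\Sp}$) is exactly the implicit argument the paper relies on, and your candid acknowledgement that a quantitative $O(\Delta T^{\alpha})$ bound would require additional work is appropriate, since the corollary as stated is qualitative.
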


\subsection{The Orthogonal Forward Regression with Spike Trains (OFRST) Algorithm}
\label{subs:OFRST}

The optimisation problem addressed by the proposed method is to learn a continuous target output $ y^*(t) $ given a SNN of size $ N $. Let $ \{s_{k}^{out}\}_{k=1}^{N} $ denote the outputs of the SNN in response to stimuli $ \{s_{k}^{in}\}_{k=1}^{N_{in}} $. Computing the optimal  $ \bs{w}_{opt} $ in the least squares sense \citep{Maass2002} leads to many non zero weights that are not particularly relevant for the learning task and overfit the data. Furthermore, the standard methods that address this problem using regularization or early stopping lead to weights that are deviated from the theoretical optimal weights as a result of the approximation error.

Theorem \ref{theo:mapping} demonstrates that the problem addressed here can be reduced to learning a target spike train $ s^{y*} $, uniquely derived from the continuous target $ y^*(t). $ This leads to a more precise estimation of weights $ \bs{w}_{opt} $ (Theorem \ref{theo:optimum}). Here we introduce a greedy selection algorithm for the spike trains that are most relevant for the learning task, called Orthogonal Forward Regression with Spike Trains (OFRST). The  OFRST algorithm is inspired by the orthogonal forward regression (OFR) for finite dimensional spaces \citep{chen1989OFR}. The remaining part of this section will first present the classical OFR and then the proposed OFRST algorithm.

Given vectors $ \{x_1,\dots,x_N \} $ and target vector $ y^* $, the OFR algorithm aims to identify a subset $ \{x_{\ell_1}\dots,x_{\ell_p} \} $ and an estimate of the parameters $ \{w_{\ell_1},\dots,w_{\ell_p} \} $ that fits the data $ y^* $. 

At the first stage, $ y^* $ is projected onto basis vectors $ \{x_1,\dots,x_N \} $. Then the error-reduction-ratio (ERR) is calculated for each vector, defined as
\begin{equation*}
ERR_k^{(1)}=\frac{\langle x_k,y^* \rangle^2}{\| x_k \|^2 \cdot \| y^* \|^2}.
\end{equation*}
The magnitude of $ ERR_k^{(1)} $ represents the proportion of the dependant variable variance explained by $ x_k. $ A geometrical interpretation of the ERR is depicted in Figure \ref{fig:OFR_intuition} for the simplified case where $ x_k\in\mathbb{R}^2, k=1,2, $ and $ y^* \in \mathbb{R}^2. $
The maximum ERR, computed as $ ERR_1=ERR_{\ell_1}^{(1)}=\max_{k=1,\dots,N}\{ ERR_k^{(1)} \} $, leads to the selection of $ x_1^{\perp}=x_{\ell_1} $ as the basis for the one-dimensional space $ E^1. $ 

\begin{figure}[!ht]
	\hfill
	\begin{center}
		\includegraphics[width=3.5in,trim={0cm 0cm 0cm 0cm},clip]{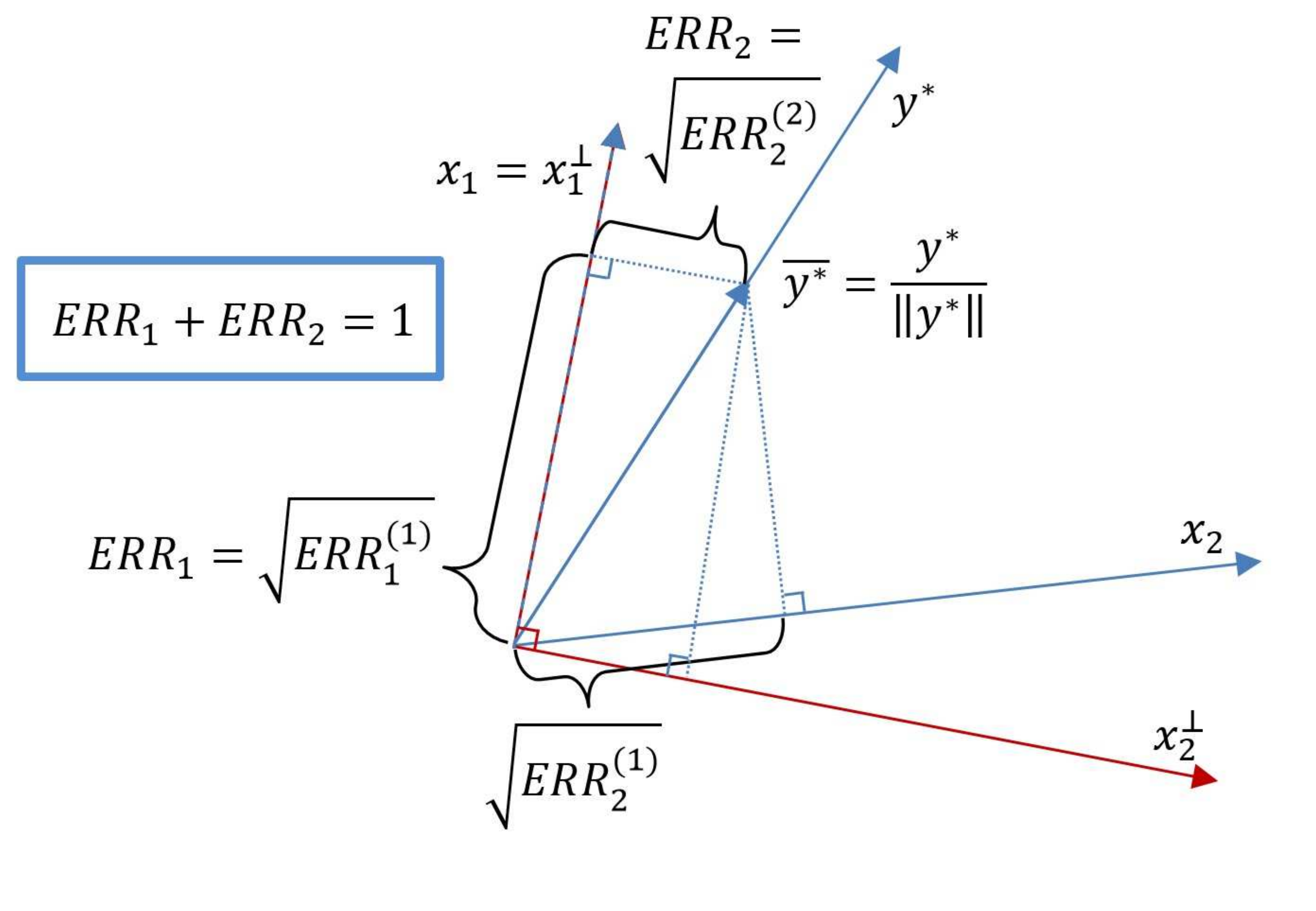}
	\end{center}
	\caption{Geometrical interpretation of OFR for the simplified two-dimensional scenario. In this case $ ERR_1^{(1)}>ERR_2^{(1)} $ implies that $ x_1 $ explains a larger proportion of the variance of target output $ y^*. $ }
	\label{fig:OFR_intuition}
\end{figure}

At the second stage, the rest of the vectors $ \{x_i\}_{i=1,\dots,N,i\neq \ell_{1}} $ are projected, through Gram-Schmidt orthogonalization, into a $ (N-1) $-dimensional space orthogonal on $ E_1 $. Subsequently, the vector $ x_{\ell_2} $ is selected and orthogonalised with the Gram-Schmidt procedure to compute $ x_2^{\perp} $. The vectors $ x_1^{\perp} $ and $ x_2^{\perp} $ form the basis for two-dimensional space $ E_2. $ Similarly, at stage number $ p $, the vector $ x_{\ell_p} $ is selected, which is used to define the $ p $-dimensional space $ E_p $ with orthogonal basis $ \{x_{i}^{\perp}\}_{i=1\dots,p} $. The detailed algorithm is given in Appendix 2.

The OFRST algorithm closely follows the steps of the OFR algorithm, implemented for the Carnell-Richardson spike train space $ \Sp. $ Initially, let $ s_{k}^{\perp(1)}=s_{k}^{out}\in \Sp,\forall k=1,\dots,N, $ be the complete set of SNN outputs. The most significant spike train $ s_{\ell_1}^{out} $ is defined as the one that maximises $ ERR_k^{(1)} $, where $ ERR_k^{(i)} $ denotes the error-reduction-ratio (ERR) of term $ k $ at iteration $ i $,  defined as 
\begin{equation*}
ERR_k^{(i)}=\frac{\left\langle s_k^{\perp(i)},s^{y*} \right\rangle_{\Sp}^2}{\|s_k^{\perp(i)}\|_{\Sp}^2\cdot\|s^{y*}\|_{\Sp}^2}.
\end{equation*}	

Subsequently, the set $ \{s_k^{\perp(2)}\}_{k=1,k\neq\ell_1}^{N} $ is computed by orthogonalising the remaining output spike trains against $ s_{\ell_1}^{out} $ using the Gram-Schmitt routine.

The process continues iteratively. At every iteration $ i $, the algorithm selects the next most significant spike train $ s_{\ell_i}^{out} $ such that $\ell_{i}=\underset{k}{\text{argmax}} \left(ERR_k^{(i)}\right) $, and generates the set $ \{ s_{\ell_1}^{out},\cdots,s_{\ell_i}^{out} \} $ of significant SNN outputs and the corresponding vector of weights $ \bs{w}^{(p)} $. Subsequently, the set $ \{s_k^{\perp(i)}\}_{k=1,k\neq\ell_1,\dots,\ell_i}^{N} $ is computed from the remaining spike trains through orthogonalisation. The process continues until $ p=N $. The final number of presynaptic neurons is selected as the smallest $ p $ that leads to the maximum prediction accuracy on the validation dataset. The detailed algorithm is given in Appendix 3.

\section{Numerical examples}
\label{sec:examples}

The proposed new Readout and associated training algorithm is evaluated in comparison with the standard architecture trained with LS, RR, lasso and ES. 

Additional numerical examples show the advantage of using a spike based Readout and the advantage of selecting the Readout presynaptic neurons using OFRST. The benefit of the proposed method is also demonstrated for two additional examples with real world data. First, OFRST is compared against the standard methods for a multi-label classification problem using multi-array recordings from the primary visual cortex of the monkey. Second, the advantage of the proposed method on a speech recognition task is shown using data from the TI-46 corpus database of spoken digits.

The LSM was simulated using the toolbox described in \citep{toolbox2003}. The Liquid consists of leaky integrate-and-fire neurons, $ 20\% $ of which were randomly selected to be inhibitory \citep{Maass2002}.  The connection probability between neurons $ a $ and $ b $ is defined as $ C\cdot e^{-\left(D(a,b)/L\right)^2} $, where $ D(a,b) $ denotes the Euclidian distance between the neurons, $ L=2 $ is a parameter that controls the average number of connections and the average distance between neurons, and $ C $, depending on whether the neurons are excitatory (E) or inhibitory (I), is $ 0.3 $ (EE) $, 0.2 $ (EI) $, 0.4 $ (IE) $, 0.1 $ (II). The synaptic transmission is given by the dynamic model proposed in \citep{Markram1998}. The input is injected into $ 30\% $ randomly chosen neurons in the Liquid with an input gain of $ 0.1 $. For the standard Readout architecture, the time constant of the exponential filters is $ \tau_s=30 $ms. The LSM was simulated using the default sampling time of $ 0.2 $ms \citep{Maass2002}. The simulations were carried out in
Matlab Version $ 8.6\ (R2015b) $ on a $ 3 $ GHz Intel Core i7-$ 5960 $X 8 core PC workstation.

\vspace{1cm}
\textbf{Example 1. Binary classification - comparison with the standard methods.} 

This example compares the performance achieved by a standard LSM with the Readout parameters estimated using the LS, RR, lasso and ES with that of a LSM comprising a spike-based Readout trained using the proposed OFRST method. The LSM consists of $ 240 $ neurons spatially organised as a lattice with dimensions $ 15 $x$ 4 $x$ 4 $.

The task is to discriminate between two spike train templates using the SNN responses. The templates are two instances of a Poisson point process with rate $ 20$ Hz, depicted in Figure \ref{fig:example1_templates}.

\begin{figure}[!ht]
	\hfill
	\begin{center}
		\includegraphics[width=6in,trim={1.2cm 0.3cm 1cm 0cm},clip]{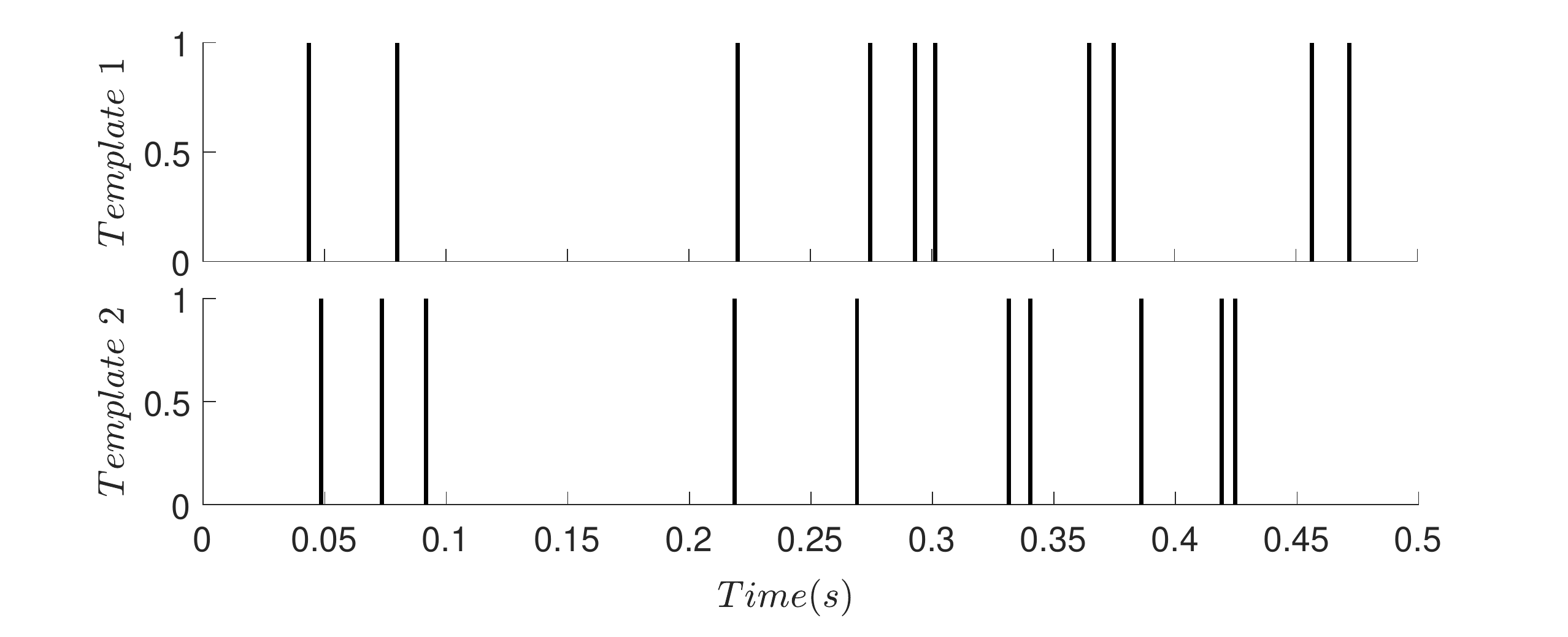}
	\end{center}
	\caption{The input templates used for classification, generated as Poisson spike trains with frequency $ 20 $ Hz over time interval $ [0, 0.5\ \text{s}] $.}
	\label{fig:example1_templates}
\end{figure}

The inputs are generated in time interval $ [0,0.5\ s] $ by jittering one of the two templates, where the jitter noise is drawn from the Gaussian distribution with zero mean and standard deviation $ 6 $ ms. A number of $ 100 $ jittered templates were generated for each class, of which $ 50 $ were used for training and $ 50 $ for validation. The two classes of inputs are assigned the target output labels $ y(t)=1 $ (template $ 1 $) and $ y(t)=-1 $ (template $ 2 $), $ t \in [0,0.5\ s] $. 

The input-output mappings are learned with the LSM by estimating the standard Readout parameters using LS, RR, lasso and ES, where the sampling time is $ \Delta T=20 $ ms \citep{Maass2002,isolated2005}. Subsequently, the spike time based Readout is trained using OFRST. The regularization parameter for RR and lasso, the number of steps for ES and the number $ p $ of presynaptic neurons for OFRST are computed using a line search that maximises the prediction accuracy on the validation dataset.

The classification accuracies for RR, lasso, ES and OFRST were evaluated as a function of the hyperparameter and averaged over $ 100 $ trials. Each trial consisted in a different Liquid and a different instance of jitter applied to the input. The results are depicted in Figure \ref{fig:early_stopping}.

\begin{figure}
	\hfill
	\begin{center}
		\includegraphics[width=6.2in,trim={0.7cm 0cm 0cm 0cm},clip]{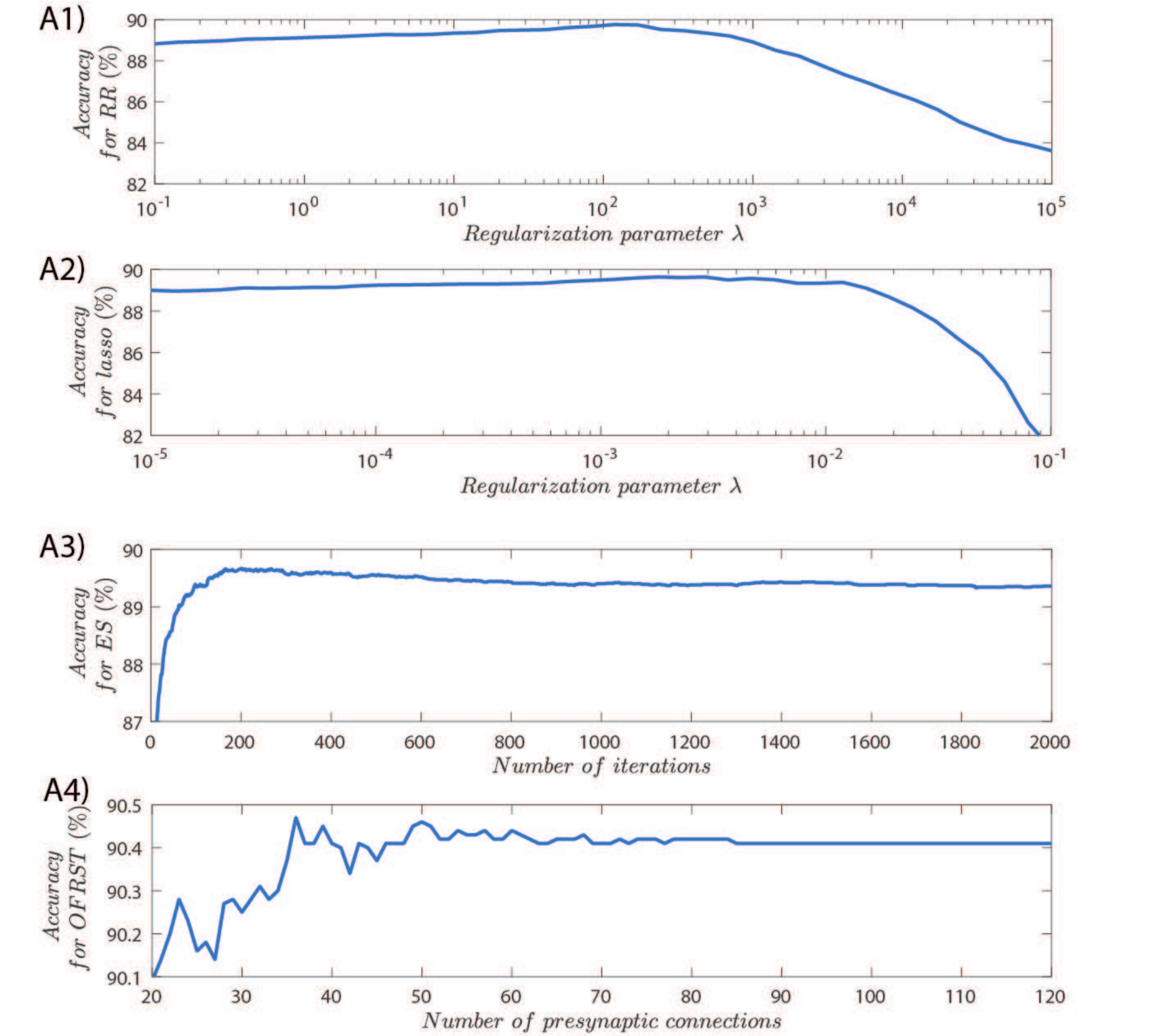}
	\end{center}	
	\caption{Binary classification with RR (A1), Lasso (A2), ES (A3), and OFRST (A4), as a function of the regularization parameter. The average accuracies were computed for each method on the validation dataset over $ 100 $ trials.}
	\label{fig:early_stopping}
\end{figure}

In the case of the OFRST algorithm the results show that, on average, the accuracy drops when using more than $ 36 $ Readout presynaptic connections, or equivalently training for more than $ 36 $ iterations. This suggests that, on average, more than $ 30 $ Readout presynaptic connections lead to overfitting the data. This result mimics what has been observed experimentally in cortical circuits, where only a small number of cortical neurons project to different areas of the central nervous system \citep{Thomson2002,Hausler2007}. 

The accuracy for each method was optimised with a different hyperparameter on each simulation trial. The classification accuracies achieved by all the methods over $ 100 $ trials are given in Table \ref{tb:example1}. The results show that, on average, OFRST has the highest accuracy from all methods while using the smallest number of synapses.

\begin{table}[ht]
	\renewcommand{\arraystretch}{0.55}
	\caption{Binary classification results using pools of $ 240 $ neurons. Comparison between least squares, ridge regression, lasso and early stopping, implemented for the standard Readout, and the proposed OFRST method for the spike based Readout. The mean ($ \pm $ standard deviation) is computed for each method over $ 100 $ trials.}
	\begin{center}
		\begin{tabular}{|c|c||c|}
			\hline
			Training method & Total number of Readout connections  &  Accuracy\\
			\hline
			Least squares & $ 56.46\ (\pm20.3) $ & $ 88.4\%\ (\pm7.28\%) $  \\
			Ridge regression & $ 56.46\ (\pm20.3) $ & $ 91.27\%\ (\pm7.07\%) $\\
			Lasso & $ 40.32\ (\pm23.07) $ & $ 91.15\%\ (\pm6.8\%) $\\
			Early stopping & $ 56.46\ (\pm20.31) $ & $ 91.28\%\ (\pm7.07\%) $\\
			OFRST & $ 15.05\ (\pm11.03) $ & $ 92.15\%(\pm6.92\%) $\\			
			\hline
		\end{tabular}
	\end{center}
	\label{tb:example1}
\end{table}

\vspace{1cm}
\textbf{Example 2. Binary classification - benefits of learning with exact spike times.} 

In this example we compare the classification accuracy of the proposed Readout trained with the OFRST method to that of the standard Readout trained with LS, RR, lasso, ES and classical OFR \citep{Billings1989} on the same binary classification task as in Example 1, but for different values of the sampling time $ \Delta T. $

The training and validation datasets were generated as in Example 1. For the OFR and OFRST methods the number the presynaptic neurons, which represent the regressors in the standard OFR algorithm \citep{Billings1989}, is the smallest number that achieves maximum accuracy on the validation dataset. In order to evaluate the effect of the sampling time $ \Delta T $ on the performance of the standard Readout, the training was performed for several sampling times ranging from $ 0.2 $ms to $ 30 $ms. The accuracies for all the methods, as a function of the sampling time, are depicted in Figure \ref{fig:example2_deltaT}. Each data point represents an average value over $ 10 $ different Liquids.

\begin{figure}[!ht]
	\hfill
	\begin{center}
		\includegraphics[width=5.5in,trim={0cm 0cm 1.5cm 0cm},clip]{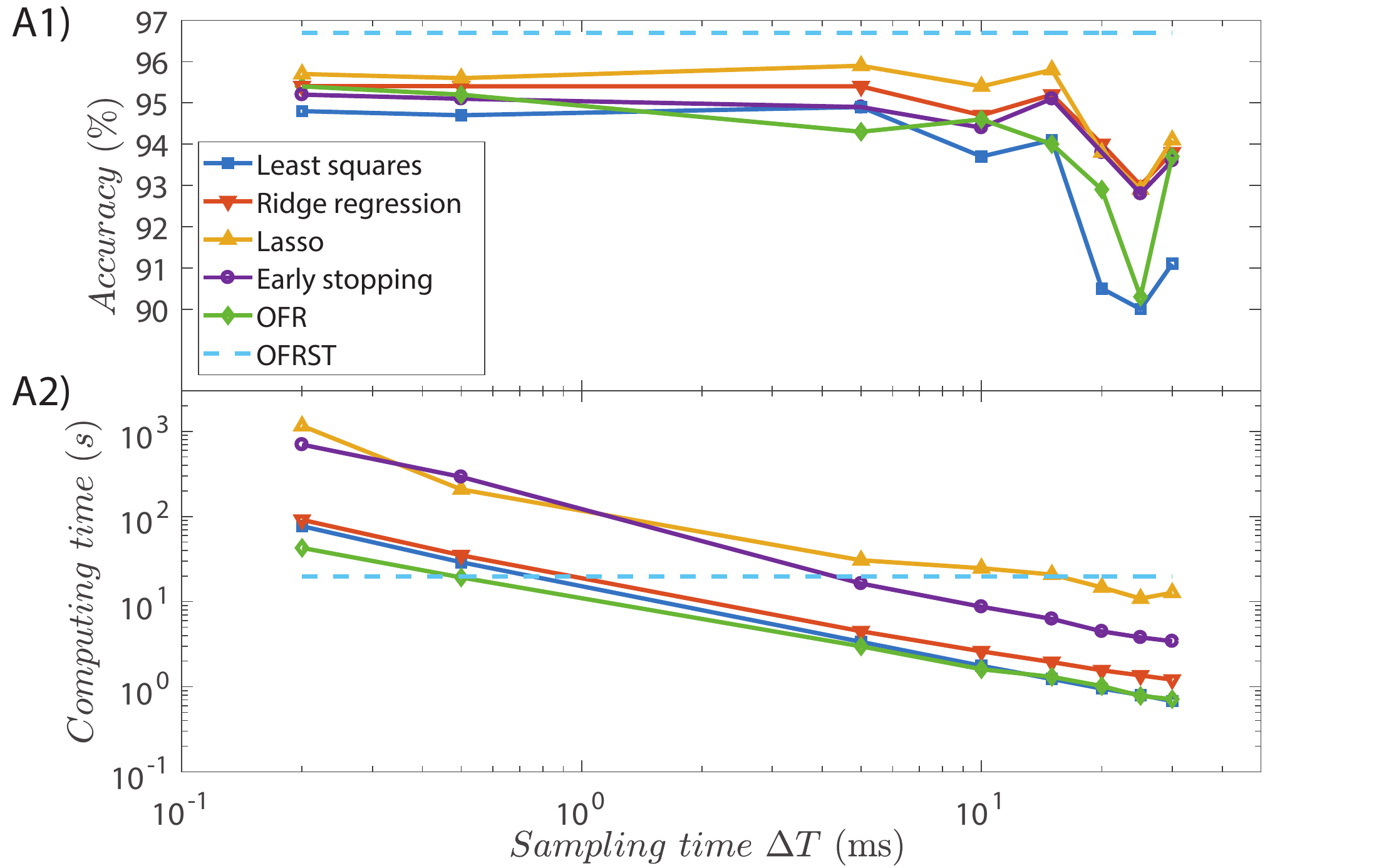}
	\end{center}
	\caption{Comparison between the proposed OFRST method and LS, RR, lasso, ES and OFR for different values of the sampling time $ \Delta T $: accuracies (A1) and computing times (A2).}
	\label{fig:example2_deltaT}
\end{figure}

The results show that the classification accuracy for the LS, RR, lasso, ES and classical OFR methods can be increased by decreasing the sampling time. However, the performance is still below the one achieved by the OFRST method, which selects presynaptic connections using the exact spike times generated by the Liquid neurons. The difference in accuracy between OFR and OFRST, which is expected to vanish when $ \Delta T \rightarrow 0 $, shows directly advantage in processing exact spike times.

Interestingly, even for $ \Delta T=0.2 $ ms, which is the sampling time used for simulating the LSM, OFRST still performs significantly better than the other methods. This is because all the training methods based on the standard Readout architecture are subject to an approximation error when estimating the weights, for any $ \Delta T >0 $.

\vspace{1cm}
\textbf{Example 3. Binary classification: selecting relevant presynaptic neurons.} 

This numerical example evaluates the performance of the OFRST in selecting the relevant presynaptic partners  using exact spike timing. 
The SNN used in this example has a reservoir consisting of two sub-networks that are disconnected from one another, each sub-network consisting of a different pool of $ 135 $ spiking neurons generated as in examples 1 and 2. Two templates were generated as Poisson spike trains with frequency of $ 20 $ Hz over interval $ [0, 0.5\text{s}]. $ The first pool $ R_1= \{r_1,\dots,r_{135}\} $ receives $ 200 $ inputs generated by jittering the two spike train templates, $ 100 $ for each class, of which $ 50 $ were used for training and $ 50 $ for validation. The jitter noise is drawn from the Gaussian distribution with zero mean and standard deviation $ 1 $ ms.
The second pool $ R_2=\{r_{136},\dots,r_{270}\} $ receives a number of $ 200 $ new jittered inputs generated from the same two templates but in a different order selected at random. 

The task is to classify the inputs to sub-network $ R_1 $ using the neuron outputs from the full reservoir. The OFRST algorithm is compared with the LS, RR, lasso, ES and the OFR algorithms, which use the standard filtered spike train outputs. 

In essence, when solving the binary classification problem, the algorithms should only select neurons from $ R_1 $ as pre-synaptic partners of the Readout unit. The training results, computed for $ 100 $ different Liquids and instances of jitter, are summarised in Table \ref{tb:neuron_selection}.

\begin{table}[ht]
	\renewcommand{\arraystretch}{0.55}
	\caption{Binary classification results for least squares, ridge regression, lasso, early stopping, standard OFR and OFRST using two unconnected sub-networks with $ 135 $ neurons each. The reported values represent means ($ \pm $ standard deviations) computed over $ 100 $ trials.}
	\begin{center}
		\begin{tabular}{|c|c|c||c|}
			\hline
			Training method & 
			\begin{tabular}{@{}c@{}}
				Total number\\ of Readout\\ connections  
			\end{tabular} & 
			\begin{tabular}{@{}c@{}}
				Percentage connections to\\ sub-network $ R_1 $ 
			\end{tabular}
			& Accuracy \\
			\hline
			Least squares & $ 67.6\ (\pm21.72) $  & $ 49.9\%\ (\pm1.6\%) $ & $ 95.7\%\ (\pm4.4\%) $\\
			Ridge regression & $ 67.6\ (\pm21.72) $  & $ 49.9\%\ (\pm1.6\%) $  & $ 97.1\%\ (\pm3.8\%) $\\
			Lasso & $ 49.6\ (\pm26.5) $  & $ 58.9\%\ (\pm14.19\%) $ & $ 97.6\%\ (\pm3.4\%) $\\
			Early stopping & $ 67.6\ (\pm21.72) $  & $ 49.9\%\ (\pm1.6\%) $  & $ 96.8\%\ (\pm4\%) $\\						
			OFR & $ 13.1\ (\pm10) $  & $ 86.7\%\ (\pm14.6\%) $ & $ 96.8\%\ (\pm4.2\%) $\\						
			OFRST & $ 9.45\ (\pm9.4) $  & $ 93.6\%\ (\pm10.7\%) $ & $ 97.8\%\ (\pm3.8\%) $\\									
			\hline
		\end{tabular}
	\end{center}
	\label{tb:neuron_selection}
\end{table}

The results show that OFRST achieves the highest accuracy among all methods using the least number of Readout presynaptic connections, and the highest percentage of connections to the correct sub-network $ R_1. $ Only OFR and OFRST achieve a percentage of connections to $ R_1 $ of over $ 90\% $, while all the other methods result in percentages just above chance.

\vspace{1cm}
\textbf{Example 4. Motion direction decoding using multi-electrode array recordings from the primary visual cortex.}

Here we use the proposed methodology to decode stimulus features using simultaneous multi-electrode array recordings of visually evoked activity from the primary visual cortex of three anesthetized macaque monkeys. The data were downloaded from the CRCNS online database \citep{Kohn2016}. Here we use the recordings from monkey number $ 1 $.

The stimuli were full-contrast drifting sinusoidal gratings at $ 12 $ orientations spaced equally ($ 0^{\circ}, 30^{\circ}, 60^{\circ}, \dots, 270^{\circ} $). Each stimulus was presented $ 200 $ times, for a duration of $ 1.3 $ s per trial \citep{Smith2008,Kelly2010}.
The spiking train responses of $ 106 $ neurons were simultaneously recorded using a Utah multi-electrode array and spike-sorted offline \citep{Smith2008,Kelly2010}. In this example we only use the first $ 200 $ ms from all recording trials, which is the time reported for visual categorisation tasks in primates \citep{FabreThorpe1998,Hung2005}. A recording trial for the $ 0^{\circ} $ drifting bar stimulus is depicted in Figure 5.

\begin{figure}[!ht] 	\label{fig:PVCdata}
	\begin{center}
		\includegraphics[width=5.5in,trim={2cm 0cm 0cm 0cm},clip]{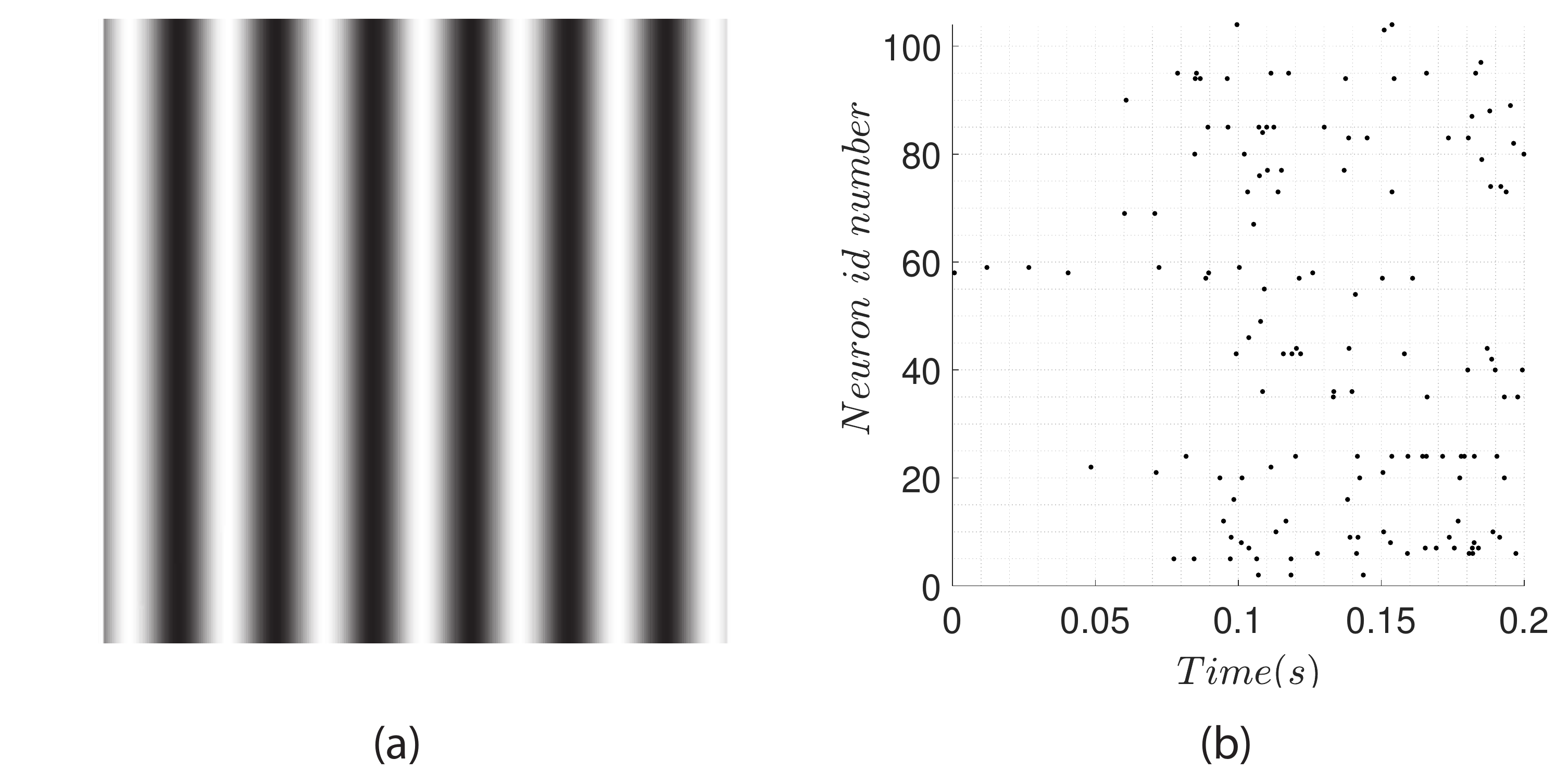}
	\end{center}
	\caption{The first sweep of experimental data used in Example 4: a) The first frame of a drifting bar stimulus oriented at $ 0^{\circ} $, b) Raster plot showing the response of $ 106 $ neurons, as a function of time.}
\end{figure}

The decoding task is to predict the stimulus orientation based on the recorded neural activity. The task is formulated as a multi-label classification problem. Each of the $ 12 $ directions was assigned a target label ($ 1-12 $) and a Readout. Each Readout processes the outputs of the $ 106 $ recorded neurons, which play the role of the Liquid spike train outputs.

The data ($ 2400 $ trials) was randomly divided into equal datasets for training and validation, such that each dataset comprises $ 100 $ trials with each of the $ 12 $ inputs. The $ 12 $ Readouts were trained using the "one-to-all" method, also known as "1-hot coding", where only one Readout generates an output "$ 1 $" at any given time.
Specifically, the target output for each Readout satisfies $ y^*(t)=1 $ when the input direction label matches the Readout label, and $ y^*(t)=-1 $ for any other direction. The overall prediction is given by the label of the Readout with maximum average value. The training data for each Readout consists of $ 100 $ trials from the target class and $ 100 $ trials evenly distributed among all other classes. The parameters of the $ 12 $ Readouts were tuned using the LS, RR, lasso, ES and the OFRST methods. Considering the large number of possible connections, here the OFRST algorithm for each Readout was stopped when the criterion $ ERR_p<\zeta $ was met, where $ \zeta=4\cdot 10^{-4} $ is a parameter determined using line search and $ ERR $ denotes the error reduction ratio (see Appendix 4). Essentially, this means that each Readout only connects to presynaptic neurons whose outputs contribute more than $ 0.04\% $ to the variance change in the target output. The regularization parameters for RR and lasso, and the number of iterations for ES were selected using line search to maximise the accuracy on the validation dataset.
The final accuracy, computed on the validation dataset, is defined as the percentage of correctly decoded input directions.

We compared the decoding performance with standard Readouts, trained with LS, RR, lasso and ES, to the performance with spike time based Readouts, trained with the OFRST algorithm described in subsection \ref{subs:OFRST}.
The results are summarised in tables \ref{tb:real_data_accuracy} and \ref{tb:real_data_connections}.

\begin{table}[ht]
	\renewcommand{\arraystretch}{0.55}
	\caption{Multi-label classification accuracies with the standard LS, RR, lasso, ES methods and the proposed OFRST method.}
	\begin{center}
		\begin{tabular}{|c|c|c|c|c|c|c|c|c|c|c|c|c|c|}
			\hline
			Training & \multicolumn{12}{|c|}{Readout accuracy (\%)} 
			& Final \\\cline{2-13}
			method & 1 &2&3&4&5&6&7&8&9&10&11&12&accuracy\\ 
			\hline
			LS & $ 86 $ & $ 84 $ & $ 83 $&$ 81 $&$ 82 $&$ 84 $&$ 85 $&$ 83 $&$ 84 $&$ 87 $&$ 80 $&$ 78 $&$ 58.17 \%$\\			
			\hline
			RR & $ 88 $ & $ 88 $ & $ 85 $&$ 87 $&$ 84 $&$ 80 $&$ 84 $&$ 88 $&$ 84 $&$ 85 $&$ 85 $&$ 82 $&$ 61.75 \%$\\			
			\hline		
			Lasso & $ 83 $ & $ 83 $ & $ 85 $&$ 87 $&$ 83 $&$ 82 $&$ 83 $&$ 90 $&$ 83 $&$ 86 $&$ 83 $&$ 81 $&$ 59.5 \% $\\
			\hline				
			ES & $ 89 $ & $ 85 $ & $ 87 $&$ 87 $&$ 82 $&$ 82 $&$ 88 $&$ 87 $&$ 88 $&$ 86 $&$ 82 $&$ 83 $&$ 61.33 \% $\\
			\hline			
			OFRST & $ 88 $ & $ 88 $ & $ 86 $&$ 86 $&$ 84 $&$ 86 $&$ 88 $&$ 86 $&$ 85 $&$ 89 $&$ 82 $&$ 84 $&$ 67.58 \%$\\
			\hline
		\end{tabular}
	\end{center}
	\label{tb:real_data_accuracy}
\end{table}

\begin{table}[ht]
	\renewcommand{\arraystretch}{0.55}
	\caption{Number of presynaptic connections selected for each Readout with the standard LS, RR, lasso, ES methods and the proposed OFRST method.}
	\begin{center}
		\begin{tabular}{|c|c|c|c|c|c|c|c|c|c|c|c|c|}
			\hline
			Training & \multicolumn{12}{|c|}{Number of Readout connections} 
			 \\\cline{2-13}
			method & 1 &2&3&4&5&6&7&8&9&10&11&12\\ 
			\hline
			LS & $ 106 $ & $ 106 $ & $ 106 $&$ 106 $&$ 106 $&$ 106 $&$ 106 $&$ 105 $&$ 105 $&$ 106 $&$ 106 $&$ 106 $\\
			\hline
			RR & $ 106 $ & $ 106 $ & $ 106 $&$ 106 $&$ 106 $&$ 106 $&$ 106 $&$ 106 $&$ 106 $&$ 106 $&$ 106 $&$ 106 $\\
			\hline
			Lasso & $ 69 $ & $ 71 $ & $ 62 $&$ 77 $&$ 69 $&$ 70 $&$ 68 $&$ 64 $&$ 75 $&$ 77 $&$ 72 $&$ 72 $\\
			\hline			
			ES & $ 106 $ & $ 106 $ & $ 106 $&$ 106 $&$ 106 $&$ 106 $&$ 106 $&$ 106 $&$ 106 $&$ 106 $&$ 106 $&$ 106 $\\
			\hline				
			OFRST & $ 49 $ & $ 61 $ & $ 63 $&$ 61 $&$ 66 $&$ 59 $&$ 58 $&$ 61 $&$ 60 $&$ 60 $&$ 61 $&$ 59 $\\
			\hline
		\end{tabular}
	\end{center}
	\label{tb:real_data_connections}
\end{table}

The results show that the proposed spike time based Readout, trained with the OFRST algorithm, performs significantly better than the standard Readout architecture trained with LS, RR, lasso, or ES, while using significantly fewer neuron connections. 

\vspace{1cm}
\textbf{Example 5. Speech recognition.}

In this example we use the proposed OFRST methodology to perform speech recognition. The data is a subset of the TI-46 corpus of isolated spoken digits, consisting of $ 500 $ utterances of digits "zero" to "nine" spoken by $ 5 $ different female speakers \footnote{\normalsize Downloaded from the Linguistic Data Consortium website: http://www.ldc.upenn.edu.} \citep{TI46_1,TI46_2}.

The decoding task is to predict the digit number using a LSM, formulated as a multi-label classification problem \citep{isolated2005}. The LSM in this example has $ 135 $ neurons, spatially organised as a lattice with dimensions $ 15 $x$ 3 $x$ 3 $.  As before, each digit was assigned a target label (1-10) and a Readout unit.

The data is preprocessed using the Lyon passive ear model, which is a model of the human inner ear, or cochlea. This model consists of three processing stages: a band-pass filter-bank, inspired by the human ear sensitivity to certain frequencies, half way rectification, and automatic gain control, which model the hair cells in the cochlea \citep{Lyon1982}. 
Subsequently, the continuous output of the Lyon passive ear model is converted into a spike train using an algorithm called Ben's spiker algorithm (BSA) \citep{Ben2003}. This preprocessing front-end, consisting of the Lyon passive ear model in series with BSA, has been used successfully to address this type of speech recognition problem using an LSM \citep{isolated2005,experimental2007,Yin2012}.

The data was divided in two sets: a training set of size $ 300 $ and a validation set of size $ 200 $, such that the recordings of each speaker are proportionally distributed between the two sets. As before, the $ 10 $ Readouts were trained using the "one-to-all" method. The training data for each Readout consists of $ 60 $ recordings from the corresponding target class and $ 60 $ recordings evenly distributed among all other classes. The final accuracy is defined as the percentage of correctly recognised digits in the validation dataset.

The $ 10 $ Readout units were trained using LS, RR, lasso, ES and the OFRST method. The stop criterion for the OFRST algorithm is $ ERR_{p+1}-ERR_{p}<\zeta $. The parameter $ \zeta $ and the regularization parameters for RR, lasso and ES were tuned for each Readout on the validation dataset using line search. 

The comparative performance of the spike time based Readouts trained with OFRST, and the standard Readouts trained with LS, RR, lasso and ES are summarised in tables \ref{tb:spoken_accuracy} and \ref{tb:spoken_connections}. 

\begin{table}[ht]
	\renewcommand{\arraystretch}{0.55}
	\caption{Multi-label classification accuracies for the LS, RR, lasso, ES methods and the proposed OFRST method, computed as mean ($ \pm $ standard deviation) for $ 10 $ different Liquid simulations.}
	\begin{center}
		\begin{tabular}{|c|c|c|c|c|c|c|}
	\hline
	Training & \multicolumn{6}{|c|}{Readout accuracy (\%)} \\\cline{2-7}
	method & 1 &2&3&4&5&6\\ 
	\hline
	LS & $ 88 (\pm6) $ & $ 96(\pm3) $ & $ 96(\pm2) $&$ 95(\pm4) $&$ 94(\pm4) $&$ 95(\pm4) $\\			
	\hline
	RR & $ 95 (\pm3) $ & $ 92(\pm3) $ & $ 91(\pm6) $&$ 93(\pm5) $&$ 95(\pm3) $&$ 93(\pm4) $\\		
	\hline		
	Lasso & $ 96 (\pm2) $ & $ 90(\pm2) $ & $ 87(\pm7) $&$ 92(\pm7) $&$ 95(\pm4) $&$ 91(\pm4) $\\
	\hline				
	ES & $ 96 (\pm3) $ & $ 92(\pm3) $ & $ 92(\pm5) $&$ 94(\pm6) $&$ 95(\pm3) $&$ 93(\pm5) $\\
	\hline			
	OFRST & $ 94 (\pm 4) $ & $ 95 (\pm 3) $ & $ 100 (\pm 1) $&$ 91 (\pm 4) $&$ 98 (\pm 2) $&$ 93 (\pm 7) $\\
	\hline
		\end{tabular}
	\end{center}
	\begin{center}
		\begin{tabular}{|c|c|c|c|c|c|c|c|}
	\hline
	Training & \multicolumn{4}{|c|}{Readout accuracy (\%)} 
	& Final  \\ \cline{2-5}
	method &7&8&9&10&accuracy\\ 
	\hline
	LS & $ 97(\pm3) $&$ 94 (\pm1) $&$ 92 (\pm5)$&$ 95 (\pm4) $&$ 73.4\% (\pm5.5\%)$\\			
	\hline
	RR &$ 100(\pm1) $&$ 92 (\pm4) $&$ 89 (\pm5)$&$ 91 (\pm2) $&$ 86.4\% (\pm 2.9\%)$\\			
	\hline		
	Lasso & $ 99(\pm1) $&$ 92 (\pm4) $& $ 83 (\pm8)$&$ 89 (\pm4) $&$ 85.9\% (\pm 2.2\%) $\\
	\hline				
	ES & $ 99(\pm1) $&$ 91 (\pm4) $&$ 89 (\pm4)$&$ 91 (\pm3) $&$ 86.6\% (\pm 2.4\%) $\\
	\hline			
	OFRST & $ 99 (\pm 1) $&$ 92 (\pm 5) $&$ 99 (\pm 2) $&$ 90 (\pm 3) $&$ 88\% (\pm 1.9\%)$\\
	\hline
		\end{tabular}
	\end{center}
	\label{tb:spoken_accuracy}
\end{table}

\begin{table}[ht]
	\renewcommand{\arraystretch}{0.55}
	\caption{The average number of presynaptic connections selected for each Readout using the LS, RR, lasso, ES methods and the proposed OFRST method, computed for $ 10 $ different Liquids.}
	\begin{center}
		\begin{tabular}{|c|c|c|c|c|c|c|c|c|c|c|c|c|}
			\hline
			Training & \multicolumn{10}{|c|}{Average number of Readout connections} 
			\\\cline{2-11}
			method & 1 &2&3&4&5&6&7&8&9&10\\ 
			\hline
			LS & $ 112 $ & $ 112 $ & $ 112 $&$ 112 $&$ 112 $&$ 112 $&$ 112 $&$ 112 $&$ 112 $&$ 112 $\\
			\hline
			RR & $ 112 $ & $ 112 $ & $ 112 $&$ 112 $&$ 112 $&$ 112 $&$ 112 $&$ 112 $&$ 112 $&$ 112 $\\
			\hline
			Lasso & $ 87 $ & $ 82 $ & $ 85 $&$ 86 $&$ 85 $&$ 86 $&$ 90 $&$ 88 $&$ 85 $&$ 85 $\\
			\hline			
			ES & $ 112 $ & $ 112 $ & $ 112 $&$ 112 $&$ 112 $&$ 112 $&$ 112 $&$ 112 $&$ 112 $&$ 112 $\\
			\hline				
			OFRST & $ 61 $ & $ 54 $ & $ 60 $&$ 59 $&$ 63 $&$ 67 $&$ 66 $&$ 68 $&$ 65 $&$ 58 $\\
			\hline
		\end{tabular}
	\end{center}
	\label{tb:spoken_connections}
\end{table}

The results show that the proposed spike based Readout architecture trained with OFRST leads to the highest final accuracy of correctly recognised spoken digits. Moreover, each spike based Readout trained with OFRST has significantly fewer connections to Liquid neurons compared to the corresponding standard Readout trained with LS, RR, lasso and ES. Relative to lasso, which results in the fewest presynaptic connections for the standard Readout, the proposed OFRST method leads to a total reduction of $ 28\% $ in number of connections to the Liquid.

\newpage
\thispagestyle{plain}
\mbox{}


\section{Conclusions}
\label{subs:conclusions}

This work proposed a spike based Readout architecture for LSMs and introduced a new training method that uses the exact spike timing information generated by SNN models, or recorded during experimental procedures. The new method implements an orthogonal forward regression algorithm for training the Readout parameters, which exploits a distance metric defined in a spike train space.

The new algorithm, called orthogonal forward regression with spike trains (OFRST), allows the selection of the connectivity between the Liquid and the Readout unit, i.e., the neurons in the Liquid that are particularly relevant for solving a given learning or decoding task.

One advantage is that computations are carried out directly on spike trains. The standard methods filter the spike trains and then perform uniform sampling in order to optimise the weights. It is demonstrated theoretically and shown through numerical simulations, with synthetic and experimental data, that the classification accuracy is improved by using exact spike times. 

Specifically, new theoretical results demonstrated that the proposed Readout trained with OFRST outperforms the standard Readout, which combines linearly the uniform samples from the neuron filtered outputs and is trained with ordinary least squares, ridge regression, lasso or early stopping. Numerical simulations with synthetic data confirmed the theoretical findings and also showed that the proposed algorithm leads to a much smaller number of Readout synapses. A numerical study showed that OFRST outperforms the standard methods on decoding the orientation of drifting gratings using the multi-electrode array recordings of the evoked activity in the primary visual cortex of the monkey. An additional example showed the advantage in using the OFRST method on a speech recognition task.

It is interesting to highlight the fact that typically around less than $ 20\% $ of the total possible connections between Liquid and Readout are required, and that fully connected Readouts achieve less accuracy  on classification tasks. This suggests that, besides decoding stimulus features from the evoked brain activity, the new training method could also be used to characterise the functional specificity of neurons in the brain.

\section*{Appendix 1. Proofs of theorems}
\begin{proof}[Proof of Theorem \ref{theo:mapping}]
	The mapping \eqref{eq:mapping} is well defined if the operator $ \cf:\Sp^{out}\rightarrow \mathbb{F}\mathbb{S}^{out}$ is well defined and invertible. 
	
	A function $ y\in \mathbb{F}\mathbb{S}^{out} $ satisfies 
	\begin{equation*}
	y(t)= \sum_{k=1}^{N} w_k\cf s_k^{out}(t)=\cf\left(\sum_{k=1}^{N}  w_k s_k^{out}\right)(t).
	\end{equation*}
	According to the definition of $ \Sp^{out} $ it follows that $ \sum_{k=1}^{N} w_k s_k^{out} \in \Sp^{out} $, and therefore $ \cf:\Sp^{out}\rightarrow \mathbb{F}\mathbb{S}^{out} $ is well defined. Moreover, $ \cf $ is invertible if it is a one-to-one and onto operator. Let $ s_1=\sum_{k=1}^{N} v_k s_k^{out} $ and $ s_2=\sum_{k=1}^{N} w_k s_k^{out}. $ Operator $ \cf $ is one-to-one if 
	\[\cf s_1=\cf s_2 \Rightarrow s_1=s_2.\]
	It follows that
	\[\cf s_1=\cf s_2 \Leftrightarrow \sum_{k=1}^{N} v_k\cf s_k^{out}(t)=\sum_{k=1}^{N} w_k\cf s_k^{out}(t)\Leftrightarrow \sum_{k=1}^{N} (v_k-w_k)\cf s_k^{out}(t)=0. \]
	
	The functions $ \{\cf s_k^{out}\}_{k=1}^N $ are linearly independent according to Remark \ref{rem:lin_ind}. It follows that $ w_k=v_k, \forall k=1,\dots,N $, and thus $ \cf $ is one-to-one. According to the definition of $ \mathbb{F}\mathbb{S}^{out} $ and due to the linearity of $ \cf $, it follows that $ \cf $ is also an onto operator, and thus it is invertible.
	
\end{proof}

\begin{proof}[Proof of Theorem \ref{theo:optimum}]
	\begin{align*}
	\| {y^*} - {\cR}_{\bs{w}} \cF \bs s^{out}\|^2_{L^2}
	&=\| {y^*} - {\cR}_{\bs{w}} \left[\cf_1 s_1^{out},\dots,\cf_N s_N^{out}\right]\|^2_{L^2}\\
	&=\| {y^*} - \sum_{n=1}^{N} w_n \cf s_n^{out}\|^2_{L^2}\\	
	&=\| {y^*} - \cf \sum_{n=1}^{N} w_n  s_n^{out}\|^2_{L^2}\\
	&=\| {y^*} - \cf \bar{\cR}_{\bs{w}}  \bs{s^{out}}\|^2_{L^2}\\
	&=\| {y^*} \|^2_{L^2}+ \|\cf \bar{\cR}_{\bs{w}}\bs{s^{out}}\|_{L^2}^2- 2\left\langle y^*,\cf \bar{\cR}_{\bs{w}}\bs{s^{out}}\right\rangle_{L^2} \\
	&=\| {y^*} \|^2_{L^2}+ \|\cf \bar{\cR}_{\bs{w}}\bs{s^{out}}\|_{L^2}^2- 2\left\langle \cP_{\mathbb{F}\Sp^{out}} (y^*),\cf \bar{\cR}_{\bs{w}}\bs{s^{out}}\right\rangle_{L^2} \\
	&=\| {y^*} \|^2_{L^2}+ \|\cf \bar{\cR}_{\bs{w}}\bs{s^{out}}\|_{L^2}^2- 2\left\langle \cf s^{y*},\cf \bar{\cR}_{\bs{w}}\bs{s^{out}}\right\rangle_{L^2} \\
	&=\| {y^*} \|^2_{L^2}+ \frac{1}{2}\|\bar{\cR}_{\bs{w}}\bs{s^{out}}\|_{\Sp}^2- \left\langle s^{y*}, \bar{\cR}_{\bs{w}}\bs{s^{out}}\right\rangle_{\Sp} \\	
	&= \frac{1}{2}\left[\|\bar{\cR}_{\bs{w}}\bs{s^{out}}\|_{\Sp}^2+\|s^{y*}\|_{\Sp}^2 - 2\left\langle s^{y*}, \bar{\cR}_{\bs{w}}\bs{s^{out}}\right\rangle_{\Sp}\right]\\ 
	&\hspace{5cm}-\frac{1}{2}\|s^{y*}\|_{\Sp}^2+\| {y^*} \|^2_{L^2} \\	
	&=\frac{1}{2}\| s^{y*} - \bar{\cR}_{\bs{w}}  \bs s^{out}\|_{\Sp}^2-\frac{1}{2}\|s^{y*}\|_{\Sp}^2+\| {y^*} \|^2_{L^2}.
	\end{align*}
\end{proof}

\section*{Appendix 2. The Standard Orthogonal Forward Regression Algorithm (OFR)}
\label{appendix2a}

Let $ ERR_j^{(p)} $ be the error reduction ratio corresponding to term $ x_j $ at iteration $ p $ defined as
\begin{equation*}
ERR_j^{(p)}=\frac{\left\langle x_j^{\perp(p)},y^* \right\rangle_{L^2}^2}{\|x_j^{\perp(p)}\|_{L^2}^2\cdot\|y^*\|_{L^2}^2}.
\end{equation*}

The algorithm for training the Readout weights and predicting the input class is given as follows.

\begin{itemize}
	\item Initialization
	\begin{itemize}
		\item $ x_j^{\perp(1)}=x_j ,j=1,\dots,N, $
		\item $ \ell_1=\underset{j\in\{1,\dots,N\}}{\textup{argmax}} ERR_j^{(1)},L^{(1)}=\{\ell_1\}, $
		
		\item $ ERR_1=ERR_{\ell_1}^{(1)}, $
		\item $ x^{\perp}_1=x_{\ell_1} ,w_1^{\perp}=\frac{\langle
			y^*,x_1^{\perp}\rangle_{L^2}}{\|x_1^{\perp}\|_{L^2}^2}, $
		\item 	$ w_1=w_1^{\perp}. $
		
	\end{itemize}
	
	\item For $ p=2,\dots,N $, compute:
	\begin{itemize}
		\item $ 
		x_j^{\perp (p)}=	x_j^{\perp(p-1)}-\frac{\langle
			x_j,x_{p-1}^{\perp}\rangle_{L^2}}{\|x_{p-1}^{\perp}\|_{L^2}^2} , j\in\{1,\dots,N\}\textbackslash L^{(p-1)}, $
		\item $ \ell_p=\underset{j\in\{1,\dots,N\}\textbackslash L^{(p-1)}}{\textup{argmax}} ERR_j^{(p)},L^{(p)}=L^{(p-1)}\cup\{\ell_p\}, $
		\item $ ERR_p=ERR_{\ell_p}^{(p)}, $
		\item $ x^{\perp}_p=x_{\ell_p} ,w_p^{\perp}=\frac{\langle
			y^*,x_p^{\perp}\rangle_{L^2}}{\|x_p^{\perp}\|_{L^2}^2}, $
		\item$ 	a_{i,p}=\frac{\langle x_i ,x_p^{\perp}\rangle_{L^2}}{\|x_p^{\perp}\|_{L^2}^2},i\in \{1,\dots,p-1\}, $
		\item $ 	\bs{A}^{(p)}=
		\left[ \begin{array}{cccc}
		1 & a_{1,2} & \dots & a_{1,p} \\
		0 & 1 & \dots & a_{2,p} \\
		\dots & \dots & \dots & \dots \\
		0 & 0 & \dots & a_{p-1,p} \\
		0 & 0 & \dots & 1 \end{array} \right], $
		\item $ \bs{w}^{\perp(p)}=[w_1^{\perp},\dots,w_p^{\perp}],	 $
		\item $ \bs{w}^{(p)}=\left[\bs{A}^{(p)}\right]^{-1}\bs{w}^{\perp(p)}, $\\		
		where $ \bs{w}^{(p)}=[w_1^{(p)},\dots,w_p^{(p)}] $ denote the Readout weights at iteration $ p $,
		\item $ \hat{y}^{(p)}=\sum_{k=1}^{p}w_k^{(p)}x_{\ell_p} , $\\
		where $ \hat{y}^{(p)} $ is the Readout output,
		\item $ 		Pred\left(\hat{y}^{(p)} \right)=\text{sign}\left[ \int_0^{T_{max}} \hat{y}^{(p)}(t) dt \right], $\\
		where $ Pred\left(\hat{y}^{(p)}\right) $ is the class prediction based on the Readout activity on time interval $ [0,T_{max}] $, and $ \text{sign}() $ denotes the sign function.
	\end{itemize}
	\item Select the smallest $ p $ that gives the minimum error for validation.
\end{itemize}

\section*{Appendix 3. The Orthogonal Forward Regression with Spike Trains Algorithm (OFRST)}
\label{appendix2}

Let $ ERR_j^{(p)} $ be the error reduction ratio corresponding to presynaptic neuron $ j $ at iteration $ p $ defined as
\begin{equation*}
	ERR_j^{(p)}=\frac{\left\langle s_j^{\perp(p)},s^{y*} \right\rangle_{\Sp}^2}{\|s_j^{\perp(p)}\|_{\Sp}^2\cdot\|s^{y*}\|_{\Sp}^2}.
\end{equation*}	

The target output spike train $ s^{y*} $ is unknown prior to training. However, for $ y^*(t)=\pm1 $, the inner product $ \left\langle s,s^{y*} \right\rangle_{\Sp}, \forall s \in \Sp, s=\{(a_k,t_k)\}_{k=1}^M, $ can be computed on given time interval $ [T_1, T_2] $, representing the total simulation time, as follows
\begin{align*}
	\left\langle s,s^{y*} \right\rangle_{\Sp} &=  2\left\langle \cf s,\cf s^{y*} \right\rangle_{L^2}\\ &= 2\left\langle \cf s,y^*\right\rangle_{L^2}\\ &= (\pm 1)\tau_s \sum_{k=1}^{M} a_k\left[e^{-\frac{\textup{max}\{T_1,t_k\}-t_k}{\tau_s}}-e^{-\frac{\textup{max}\{T_2,t_k\}-t_k}{\tau_s}}\right].
\end{align*}

The algorithm for training the Readout weights and predicting the input class is given as follows.

\begin{itemize}
	\item Initialization
	\begin{itemize}
		\item $ s_j^{\perp(1)}=s_j^{out},j=1,\dots,N, $
		\item $ \ell_1=\underset{j\in\{1,\dots,N\}}{\textup{argmax}} ERR_j^{(1)},L^{(1)}=\{\ell_1\}, $

		\item $ ERR_1=ERR_{\ell_1}^{(1)}, $
		\item $ s^{\perp}_1=s_{\ell_1}^{out},w_1^{\perp}=\frac{\langle
	s^{y*},s_1^{\perp}\rangle_{\Sp}}{\|s_1^{\perp}\|_{\Sp}^2}, $
		\item 	$ w_1=w_1^{\perp}. $

	\end{itemize}

	\item For $ p=2,\dots,N $, compute:
	\begin{itemize}
		\item $ s_j^{\perp (p)}=	s_j^{\perp(p-1)}-\frac{\langle s_j^{out},s_{p-1}^{\perp}\rangle_{\Sp}}{\|s_{p-1}^{\perp}\|_{\Sp}^2},j\in\{1,\dots,N\}\textbackslash L^{(p-1)}, $
		\item $ \ell_p=\underset{j\in\{1,\dots,N\}\textbackslash L^{(p-1)}}{\textup{argmax}} ERR_j^{(p)},L^{(p)}=L^{(p-1)}\cup\{\ell_p\}, $
		\item $ ERR_p=ERR_{\ell_p}^{(p)}, $
		\item $ s^{\perp}_p=s_{\ell_p}^{out},w_p^{\perp}=\frac{\langle
	s^{y*},s_p^{\perp}\rangle_{\Sp}}{\|s_p^{\perp}\|_{\Sp}^2}, $
		\item$ 	a_{i,p}=\frac{\langle s_i^{out},s_p^{\perp}\rangle_{\Sp}}{\|s_p^{\perp}\|_{\Sp}^2},i\in \{1,\dots,p-1\}, $
		\item $ 	\bs{A}^{(p)}=
	\left[ \begin{array}{cccc}
	1 & a_{1,2} & \dots & a_{1,p} \\
	0 & 1 & \dots & a_{2,p} \\
	\dots & \dots & \dots & \dots \\
	0 & 0 & \dots & a_{p-1,p} \\
	0 & 0 & \dots & 1 \end{array} \right], $
		\item $ \bs{w}^{\perp(p)}=[w_1^{\perp},\dots,w_p^{\perp}],	 $
		\item $ \bs{w}^{(p)}=\left[\bs{A}^{(p)}\right]^{-1}\bs{w}^{\perp(p)}, $\\		
where $ \bs{w}^{(p)}=[w_1^{(p)},\dots,w_p^{(p)}] $ denote the Readout weights at iteration $ p $,
		\item $ \hat{s}^{(p)}=\sum_{k=1}^{p}w_k^{(p)}s_{\ell_p}^{out}, $\\
where $ \hat{s}^{(p)} $ is the Readout output,
		\item $ 		Pred\left(\hat{s}^{(p)} \right)=\text{sign}\left[T_{max}-2\tau_s\sum_{k=1}^{M_p}a_k^{(p)}\left(e^{-\frac{T_{max}-t_k^{(p)}}{\tau_s}}-1\right)\right], $\\
where $ Pred\left(\hat{s}^{(p)}\right) $ is the class prediction based on the Readout activity on time interval $ [0,T_{max}] $, $ \text{sign}() $ denotes the sign function, and $ \hat{s}^{(p)}=\left\lbrace\left(a_k^{(p)},t_k^{(p)}\right)\right\rbrace_{k=1}^{M_p}. $
\end{itemize}
		\item Select the smallest $ p $ that gives the minimum error for validation.
\end{itemize}

\section*{Acknowledgments}

DF and DC gratefully acknowledge that this work was supported by BBSRC under grant BB/M025527/1. We also gratefully acknowledge reviewers' comments, which helped improve the quality of the manuscript.

\end{document}